\documentclass[conference]{IEEEtran}
\IEEEoverridecommandlockouts

\usepackage[utf8]{inputenc}
\usepackage{cite}
\usepackage{amsmath,amssymb,amsfonts}
\usepackage{amsthm}
\usepackage{algorithm}
\usepackage{algorithmic}
\usepackage{graphicx}
\usepackage{textcomp}
\usepackage{subcaption}
\usepackage{booktabs}
\usepackage{flushend}
\usepackage[table]{xcolor}

\newcommand{\gray}[1]{\textcolor{gray}{#1}}

\newtheorem{assumption}{Assumption}

\newtheorem{theorem}{Theorem}
\newtheorem{corollary}{Corollary}

\def\BibTeX{{\rm B\kern-.05em{\sc i\kern-.025em b}\kern-.08em
    T\kern-.1667em\lower.7ex\hbox{E}\kern-.125emX}}

\begin{document}

\title{SCALE: Sensitivity-Aware Federated Unlearning with Information Freshness Optimization for Mobile Edge Computing
\thanks{This work was supported by the National Science Foundation under grant CNS-2348422.}
}

\author{
\IEEEauthorblockN{Zihao Ding, Beining Wu, and Jun Huang}
\IEEEauthorblockA{Department of Electrical Engineering and Computer Science,
South Dakota State University, Brookings, SD 57007, USA\\
Email: \{zihao.ding, wu.beining\}@jacks.sdstate.edu, jun.huang@sdstate.edu} 
}

\maketitle

\begin{abstract}
Federated Unlearning (FU) is emerging as a powerful tool that enables the selective removal of client data to effectively address data contamination and meet strict privacy regulations in mobile edge computing (MEC) systems. Although FU has recently drawn attention in the AI community, existing approaches suffer from low unlearning precision and lack temporal information reflection, which results in suboptimal forgetting performance. To address these issues, we propose \texttt{SCALE}, a dual-level unlearning framework combining historical contribution analysis with information freshness-aware adaptive sparsification. Our framework first employs a historical contribution-based layer sensitivity analysis to identify layers most influenced by target clients, then performs fine-grained unlearning through adaptive sparsification at the weight sub-group level to balance information freshness with forgetting effectiveness. Through theoretical analysis, the proposed framework demonstrates the convergence properties and acceleration advantages. Our experiments and testbed results demonstrate superior unlearning effectiveness compared to state-of-the-art baselines, with significantly improved forgetting performance.
\end{abstract}

\begin{IEEEkeywords}
Federated Unlearning, Age-of-Information, Reinforcement Learning, Mobile Edge Computing
\end{IEEEkeywords}

\section{Introduction}
\label{sec:introduction}

While the integration of Mobile Edge Computing (MEC) and Federated Learning (FL) lays a foundation for privacy-preserving and resource-efficient edge intelligence, it raises new issues when data needs to be removed from trained models~\cite{Wu2025TON}. Real-world applications often require selective deletion of client data due to privacy regulations, user requests, or data contamination \cite{Chourasia2023ICML}. In smart traffic management, for example, traffic monitoring devices must send sensitive location and movement data to central servers for model training, which can expose private user information and yield massive communication overhead \cite{Xu2025TITS,Wu2026ARXIV,Dong2025TCCN}. FL frameworks, however, lack built-in mechanisms for eliminating the influence of such data once it has been incorporated into the global model \cite{Romandini2025TNNLS,Wu2025WASA,Ding2025IPCCC}.

Federated Unlearning (FU) has emerged as a powerful tool for MEC systems through the selective removal of contributions from specific devices while retaining knowledge from trusted clients. FU is of particular importance when devices introduce corrupted data, experience security breaches, or when regulatory requirements mandate data deletion~\cite{Wang2023AsiaCCS,Liu2021IWQoS,Liu2022INFOCOM,Wu2025MNET,Wu2025RACS}. 
A key challenge in applying FU to MEC arises from the need to maintain both system responsiveness and information freshness within strict resource constraints. Age of Information (AoI) plays a critical role in this regard, as it prioritizes timely updates and prevents outdated parameters from degrading the effectiveness of forgetting and overall system performance~\cite{Wu2023ACCESS, Cui2025INFOCOM, Wu2025MNET,Xing2026ACR}. To see this, again, in smart traffic management, cameras monitoring construction zones may capture abnormal patterns that become irrelevant once the construction concludes, while compromised devices may inject malicious data that disrupts citywide traffic optimization. Evolving seasons and infrastructure modifications frequently diminish the relevance of historical data, which calls for its careful removal from existing models~\cite{Depa2025SPW,DingICNC2025}. As such, \emph{the goal of this work is to achieve federated unlearning in MEC systems to ensure model quality and unlearning efficiency.}

Recent advances in federated unlearning have explored different strategies for selectively removing data in distributed systems, falling into two main camps: \emph{retraining-based}~\cite{Zhou2025TMC,Zhao2025MNET,Liu2022INFOCOM,Liu2021IWQoS,Sheng2024CIKM,Romandini2025TNNLS,Huynh2025TOIS} and \emph{model manipulation-based} approaches~\cite{Zeng2025TETCI,Ameen2025TSUSC,Bourtoule2021SP,Wang2022WWW}. Retraining-based methods guarantee perfect unlearning by removing target data and starting training from scratch~\cite{Weng2024TIFS,Chien2024NeurIPS}, but they demand crushing computational and communication costs that make them impractical for real-time MEC deployments~\cite{Liu2022INFOCOM,Liu2021IWQoS,Huang2025TMC,Wu2025TON}. Even with clustering-based speedup techniques and improved quasi-Newton methods designed to reduce overhead, the computational burden remains too heavy for dynamic systems~\cite{Huang2025TMC,Ameen2025TSUSC}. In contrast, model manipulation methods directly modify trained models using gradient ascent~\cite{Huang2024NeurIPS,Lin2024ACMMM,Pan2025AAAI,Wu2026ARXIVPRISM}, knowledge distillation~\cite{Wang2025NAACL,Kim2024AAAI}, parameter pruning~\cite{Xu2025TBDATA,Zaman2025ARXIV}, and selective parameter updates~\cite{Ye2025WWW,Zuo2025TKDE,Wu2026TNSE} to eliminate target knowledge. These approaches include model transformation methods that adjust parameters to counteract forgotten sample influence, pruning techniques that remove client-specific components, and knowledge distillation frameworks that use historical models as teachers~\cite{Zhang2025WWW,Zeng2025TETCI,Wu2026ARXIV1,Ding2026ARXIVEASE}. 
While these methods show better efficiency than retraining, existing FU approaches still present issues in MEC systems, including low precision that affects non-target clients and neglect of temporal information freshness that leads to suboptimal forgetting performance, especially in time-critical MEC systems~\cite{Zhao2025MNET,Ding2025TMC,Ding2026ARXIVTwinLoop,Wu2026COMST,Pudasaini2026HPSR}.

To address these issues, we design a novel unlearning mechanism. Our contributions made in this paper are summarized as follows:
\begin{itemize}
    \item Unlike existing FU approaches that ignore temporal features, we introduce AoI into FU to prioritize parameter modifications based on information freshness. We design \texttt{SCALE}, a dual-level \underline{\textbf{S}}ensitivity-aware \underline{\textbf{C}}lient unlearning via \underline{\textbf{A}}daptive \underline{\textbf{L}}ayer ag\underline{\textbf{E}}-of-information framework combining historical contribution-based layer sensitivity analysis with reinforcement learning-driven adaptive parameter sparsification.
    
    \item We develop a theoretical framework for \texttt{SCALE} by demonstrating the convergence properties and acceleration advantages of \texttt{SCALE}. The analytical results show that the designed framework achieves an $O(\sqrt{L/|\mathcal{L}_s|})$ convergence advantage over the uniform parameter modification framework, as formal proofs demonstrate.
    
    \item In sharp contrast to existing studies that rely on simulation, we implement a hardware testbed intergrating USRP software-defined radios as the communication backhaul and MentorPi robotic vehicles as FU clients to validate \texttt{SCALE}. Extensive evaluations across three neural network architectures (LeNet, MobileNetV3, ResNet18) and three unlearning scenarios (client, class, and sample unlearning) demonstrate that \texttt{SCALE} outperforms state-of-the-art baselines in both unlearning effectiveness and communication efficiency.
\end{itemize}

The rest of this paper is organized as follows. Section~\ref{sec:problem_formulation} presents the system model for MEC-based federated learning and formulates the unlearning problem with different cases. Section~\ref{sec:method} introduces our proposed \texttt{SCALE} framework, including the Historical Contribution-based Layer Sensitivity Analysis and AoI-driven adaptive sparsification algorithm. Section~\ref{sec:theoretical_analysis} provides a comprehensive theoretical analysis with convergence properties and acceleration advantages. Section~\ref{sec:performance_evaluation} evaluates the framework through extensive experiments and real-world tests. Finally, Section~\ref{sec:conclusion} concludes the paper.
\section{System Model and Problem Formulation} \label{sec:problem_formulation}

\subsection{System Model}

We consider a Mobile Edge Computing (MEC) system that comprises distributed edge devices deployed at different geographical locations. Edge devices, including IoT sensors, mobile phones, and vehicular units, are connected to edge servers via a wireless federated learning paradigm with limited bandwidth and varying channel conditions. In this system, edge devices collect local data and collaboratively train a model while maintaining data locality and privacy. Fig.~\ref{fig:mec_fl_system} illustrates the process of client unlearning in the MEC system.

\begin{figure}
    \centering
    \includegraphics[width=0.7\linewidth]{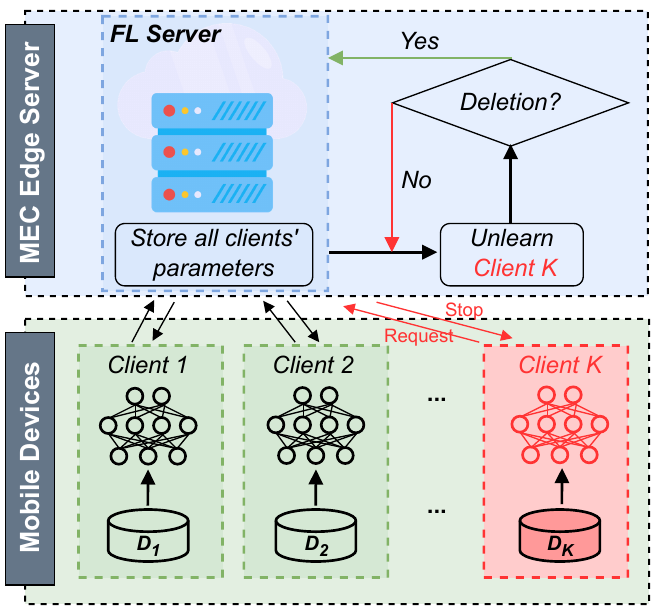}
    \caption{Client unlearning in the MEC system.}
    \label{fig:mec_fl_system}
\end{figure}

Assume that there are $N$ clients and a central server, where $N \geq 1$ denotes the total number of clients. Each client is indexed by $n \in \{1,2,\ldots,N\}$. Client $n$ is an edge device that maintains a local dataset $\mathcal{D}_n = \{(x_{m,n}, y_{m,n})\}_{m=1}^{M_n}$, where $x_{m,n} \in \mathcal{X}$ represents the $m$-th data sample, $y_{m,n} \in \mathcal{Y}$ denotes the corresponding label, $m \in \{1,2,\ldots,M_n\}$ is the sample index, and $M_n = |\mathcal{D}_n|$ is the size of client $n$'s dataset. The global dataset is $\mathcal{D} = \bigcup_{n=1}^{N} \mathcal{D}_n$ with total size $M = |\mathcal{D}| = \sum_{n=1}^{N} M_n$.

We assume that the global model consists of $L$ layers, where layer $l \in \{1,2,\ldots,L\}$ contains parameters $W_l \in \mathbb{R}^{d_l}$ with dimension $d_l$. The complete global model parameters are $\Theta = \{W_1, W_2, \ldots, W_L\} \in \mathbb{R}^d$, where $d = \sum_{l=1}^{L} d_l$.

We define the local loss function for client $n$ as:
\begin{equation}
F_n(\Theta) = \frac{1}{M_n} \sum_{m=1}^{M_n} \ell(\Theta; x_{m,n}, y_{m,n}).
\label{eq:local_loss}
\end{equation}

We formulate the global loss function as:
\begin{align}
F(\Theta) &= \sum_{n=1}^{N} \frac{M_n}{M} F_n(\Theta) \nonumber \\
& = \frac{1}{M} \sum_{n=1}^{N} \sum_{m=1}^{M_n} \ell(\Theta; x_{m,n}, y_{m,n}),
\label{eq:global_loss}
\end{align}
where the contribution of each client is weighted by its data proportion $\frac{M_n}{M}$, and $\ell(\cdot)$ is the per-sample loss function.

Training proceeds over $T$ communication rounds, where $t \in \{1,2,\ldots,T\}$ denotes the round index. At round $t$, the server selects a subset of clients $\mathcal{C}[t] \subseteq \{1,2,\ldots,N\}$ to participate in training, where $\mathcal{C}[t]$ denotes the set of selected clients in round $t$. The server broadcasts the current global model $\Theta[t]$ to these selected clients. Each participating client $n \in \mathcal{C}[t]$ initializes its local model with the global model $\Theta_{n,0}[t] = \Theta[t]$ and performs $E$ local updates, where $e \in \{0,1,\ldots,E-1\}$ denotes the local step index:
\begin{equation}
\Theta_{n,e+1}[t] = \Theta_{n,e}[t] - \eta \nabla F_n(\Theta_{n,e}[t]),
\label{eq:local_update}
\end{equation}
where $\eta > 0$ is the learning rate.

\begin{figure*}[t]
    \centering
    \includegraphics[width=\textwidth]{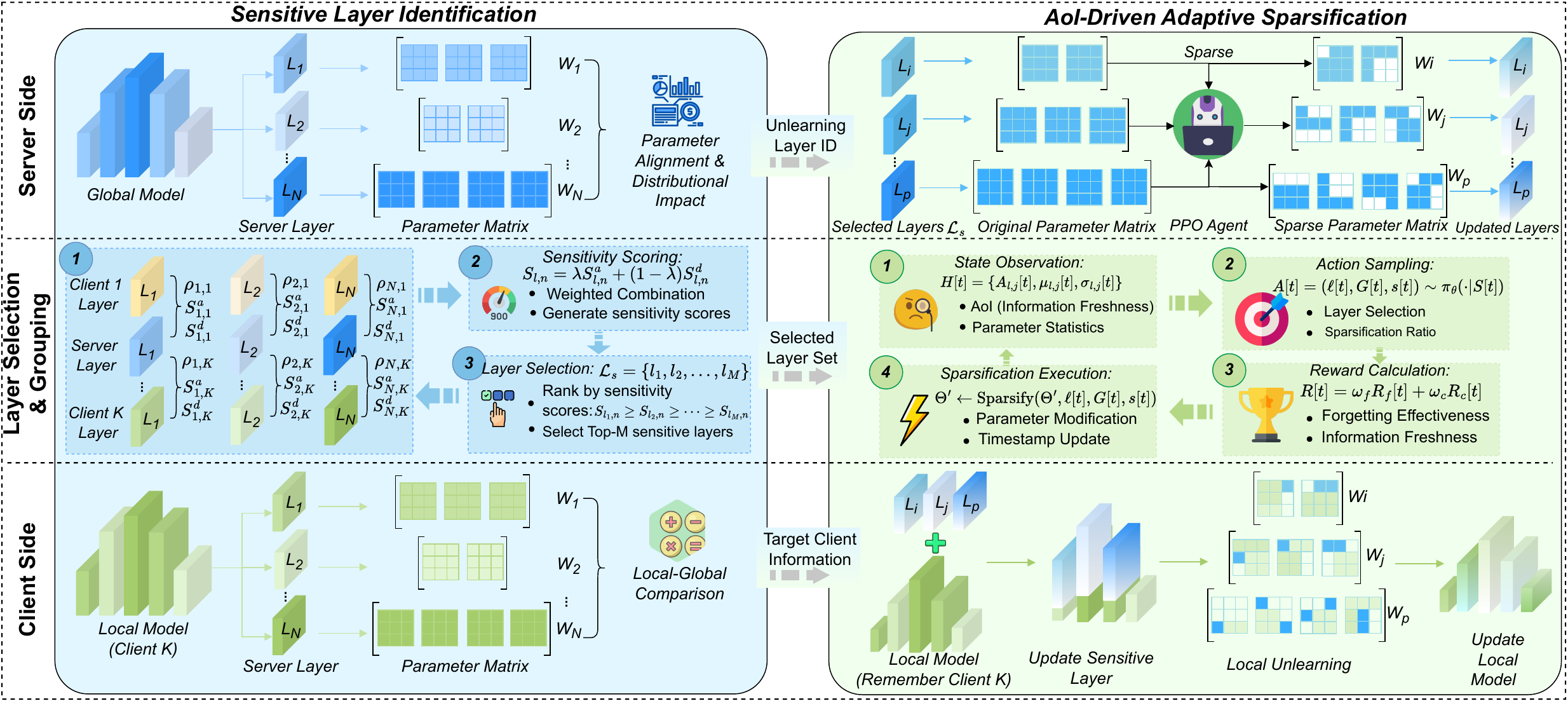}
    \caption{An overview of the \texttt{SCALE}.}
    \label{fig:scale_overview}
\end{figure*}

After local training, each client $n \in \mathcal{C}[t]$ uploads its final local model $\Theta_{n,E}[t]$ to the server. The server aggregates these updates via weighted averaging:
\begin{equation}
\Theta[t+1] = \sum_{n \in \mathcal{C}[t]} \frac{M_n}{\sum_{n' \in \mathcal{C}[t]} M_{n'}} \Theta_{n,E}[t].
\label{eq:fed_agg}
\end{equation}

\subsection{Problem Formulation}

Clients can request the removal of their data from the trained global model, which creates challenges due to distributed data ownership, historical contributions over training rounds, and the computational costs associated with full retraining.

Assume an unlearning request is specified by $\mathcal{U} = (\mathcal{C}_u, \mathcal{D}_u, \tau)$, where $\mathcal{C}_u \subseteq \{1,2,\ldots,N\}$ is the set of requesting clients, $\mathcal{D}_u = \bigcup_{n \in \mathcal{C}_u} \mathcal{D}_{u,n}$ is the data to be forgotten with $\mathcal{D}_{u,n} \subseteq \mathcal{D}_n$, and $\tau \in \{\text{client}, \text{class}, \text{sample}\}$ specifies the unlearning granularity.

Specifically, we consider three unlearning cases:
\begin{itemize}
    \item \textit{Client unlearning} ($\tau = \text{client}$): Client $n \in \{1,2,\ldots,N\}$ requests complete withdrawal from the federation. We set $\mathcal{C}_u = \{n\}$, $\mathcal{D}_{u,n} = \mathcal{D}_n$, and $\mathcal{D}_{u,m} = \emptyset$ for all $m \neq n$;

    \item \textit{Class unlearning} ($\tau = \text{class}$): Client $n$ requests forgetting all samples belonging to specific classes $\mathcal{Y}_u \subseteq \mathcal{Y}$. We set $\mathcal{C}_u = \{n\}$, $\mathcal{D}_{u,n} = \{(x,y) \in \mathcal{D}_n : y \in \mathcal{Y}_u\}$, and $\mathcal{D}_{u,m} = \emptyset$ for all $m \neq n$; and 

    \item \textit{Sample unlearning} ($\tau = \text{sample}$): Client $n$ requests forgetting specific samples $\mathcal{D}_{u,n} \subset \mathcal{D}_n$ while remaining in the federation. We set $\mathcal{C}_u = \{n\}$ and $\mathcal{D}_{u,m} = \emptyset$ for all $m \neq n$.
\end{itemize}

Note that after unlearning, the remaining dataset is $\mathcal{D}_r = \bigcup_{n=1}^{N} (\mathcal{D}_n \setminus \mathcal{D}_{u,n})$ with size $M_r = |\mathcal{D}_r| = \sum_{n=1}^{N} M_{r,n}$, where $M_{r,n} = |\mathcal{D}_n \setminus \mathcal{D}_{u,n}|$ is the remaining data size for client $n$.

Our objective is to obtain an unlearned model $\Theta'$ that approximates the ideal retrained model $\Theta^*$ obtained by training from scratch on the remaining data:
\begin{equation}
\Theta^* = \arg\min_\Theta F^r(\Theta) = \arg\min_\Theta \sum_{n=1}^{N} \frac{M_{r,n}}{M_r} F_{n}^r(\Theta),
\label{eq:ideal_retrain}
\end{equation}
where $F_{n}^r(\Theta) = \frac{1}{M_{r,n}} \sum_{(x,y) \in \mathcal{D}_n \setminus \mathcal{D}_{u,n}} \ell(\Theta; x, y)$ is the local loss on client $n$'s remaining data.

\section{The proposed framework: \texttt{SCALE}} \label{sec:method}

To address the above problem, we propose a dual-level FU framework that integrates historical contribution analysis with AoI-aware sparsification to achieve efficient and effective client data removal in mobile edge computing systems. Our approach, coined \textbf{\underline{S}}ensitivity-aware \textbf{\underline{C}}lient unlearning via \textbf{\underline{A}}daptive \textbf{\underline{L}}ayer ag\textbf{\underline{E}}-of-information (\texttt{SCALE}), first identifies the most sensitive layers through historical contribution analysis, then performs fine-grained parameter sparsification guided by information freshness to balance unlearning effectiveness with system responsiveness. The workflow of \texttt{SCALE} is illustrated in Fig.~\ref{fig:scale_overview}.

\subsection{Sensitive Layers Identification}

To efficiently perform unlearning for client $n$, we must first identify which layers in the global model have been most influenced by the data of this client. Our key observation is that different layers exhibit varying degrees of sensitivity to individual clients; local patterns of client $n$ may heavily influence some layers, while others remain relatively unaffected.

We quantify this layer-wise sensitivity by comparing the parameters of each layer between client $n$'s local model and the global model from Eq.~\eqref{eq:fed_agg}. To be specific, we analyze two complementary aspects: \emph{parameter alignment} (how similar the client's layer parameters are to the global layer parameters) and \emph{distributional impact} (how much the global layer would change if client $n$'s contribution were removed).

\subsubsection{Parameter Alignment}
For each layer $l \in \{1,2,\ldots,L\}$, we first measure how well the local model parameters of client $n$ align with the corresponding global model parameters. Let $W_{l,n} \in \mathbb{R}^{d_l}$ denote client $n$'s parameters for layer $l$, and $W_l \in \mathbb{R}^{d_l}$ denote the global model's parameters for the same layer from $\Theta = \{W_1, W_2, \ldots, W_L\}$, where $d_l$ is the number of parameters in layer $l$. We calculate the \emph{Pearson correlation coefficient}~\cite{sedgwick2012pearson} by
\begin{equation}
\rho_{l,n} = \text{Corr}(W_{l,n}, W_l),
\label{eq:correlation}
\end{equation}
where $\text{Corr}(\cdot, \cdot)$ denotes the \emph{Pearson correlation function}. 

The alignment-based sensitivity is thus:
\begin{equation}
S_{l,n}^{a} = -\frac{1}{2} \log(1 - \rho_{l,n}^2).
\label{eq:align_sensitivity}
\end{equation}

Note that a higher correlation value indicates that the parameters of client $n$ are well-aligned with the global layer. In this case, the parameters of client $n$ have a strong influence on the global model.

\subsubsection{Distributional Impact}
Parameter alignment alone may not capture the full picture of client influence, as a client can exhibit moderate alignment but still cause significant changes when removed. To resolve this, we measure global layer parameters without client $n$'s contribution:
\begin{equation}
W_{l,-n} = \frac{\sum_{j \neq n, j=1}^{N} M_j W_{l,j}}{\sum_{j \neq n, j=1}^{N} M_j},
\label{eq:without_client_n}
\end{equation}
where $M_j$ is the number of data samples at client $j$ as defined in Eq.~\eqref{eq:global_loss}, and $W_{l,j}$ denotes client $j$'s parameters of layer $l$.

We then measure the distributional difference between the current global layer and the hypothetical layer without client $n$ using Kullback–Leibler (KL) divergence~\cite{van2014TIT}:
\begin{equation}
S_{l,n}^{d} = \sum_{i=1}^{d_l} P_i \log \frac{P_i}{Q_i},
\label{eq:impact_sensitivity}
\end{equation}
where $P_i$ and $Q_i$ are the normalized probability values derived from $W_l$ and $W_{l,-n}$ respectively for the $i$-th parameter. The KL divergence quantifies how much the parameter distribution would change if client $n$ were excluded, with larger values indicating greater distributional impact.

\subsubsection{Weighted Score}
We combine both measures to obtain a comprehensive sensitivity score for each layer:
\begin{equation}
S_{l,n} = \lambda S_{l,n}^{a} + (1-\lambda) S_{l,n}^{d},
\label{eq:combined_sensitivity}
\end{equation}
where $\lambda \in [0,1]$ is a hyperparameter balancing the importance of parameter alignment versus distributional impact.

\subsubsection{Sensitive Layer Identification}
We rank all layers by their sensitivity scores and select the top-$M$ layers:
\begin{equation}
\begin{split}
&\mathcal{L}_s = \{l_1, l_2, \ldots, l_M\} \\
\text{ such that } &S_{l_1,n} \geq S_{l_2,n} \geq \cdots \geq S_{l_M,n},
\end{split}
\label{eq:sensitive_layers}
\end{equation}
where $M$ is determined based on computational budget and unlearning requirements. 

The identified sensitive layers $\mathcal{L}_s$ are the primary targets for sparsification to effectively remove the knowledge associated with client $n$.

\subsection{AoI-Driven Adaptive Sparsification}

Given the identified sensitive layers $\mathcal{L}_s$ from Eq.~\eqref{eq:sensitive_layers}, we need to determine which specific parameters within these layers should be modified for effective unlearning. Our trick is to prioritize parameters based on their \emph{information freshness}, parameters with higher AoI values contain relatively stale information, making them safer targets for sparsification while achieving effective removal of historical influence of client $n$.

We model this as a sequential decision process where we adaptively select parameter groups to sparsify based on their AoI from Eq.~\eqref{eq:aoi_def} and structural properties. Such a formulation allows us to make informed decisions about which parameters to modify while optimizing information freshness.

\subsubsection{State Space}
Each sensitive layer $l \in \mathcal{L}_s$ is partitioned into $G_l$ parameter groups for fine-grained control. The system state at time $t$ captures both temporal and structural information:
\begin{equation}
\mathcal{H}[t] = \{A_{l,j}[t], \mu_{l,j}[t], \sigma_{l,j}[t]\},
\label{eq:state_space}
\end{equation}
where $A_{l,j}[t]$ is the AoI for parameter group $j$ in layer $l$ from Eq.~\eqref{eq:aoi_def}, $\mu_{l,j}[t] = \frac{1}{|W_{l,j}|} \sum_{m \in W_{l,j}} W_{l,m,j}[t]$ is the mean parameter value of group $(l,j)$ with $|W_{l,j}|$ denoting the group size, and $\sigma_{l,j}[t] = \sqrt{\frac{1}{|W_{l,j}|} \sum_{m \in W_{l,j}} (W_{l,m,j}[t] - \mu_{l,j}[t])^2}$ is the standard deviation of parameters in group $(l,j)$, providing structural information about parameter distributions.

\subsubsection{Action Space}
At each decision step, we choose:
\begin{equation}
\mathcal{A}[t] = (\ell[t], \mathcal{G}[t], s[t]),
\label{eq:action_space}
\end{equation}
where $\ell[t] \in \mathcal{L}_s$ is the target sensitive layer to modify, $\mathcal{G}[t] \subseteq \{1,\ldots,G_{\ell[t]}\}$ specifies the selected parameter groups within layer $\ell[t]$, and $s[t] \in [0,1]$ is the sparsification ratio determining the fraction of parameters to remove from the selected groups.

\subsubsection{Reward Function}
The reward function combines two key components--forgetting effectiveness reward and information freshness reward: 
\begin{equation}
\mathcal{R}[t] = w_f \mathcal{R}_f[t] + w_c \mathcal{R}_c[t],
\label{eq:simplified_reward}
\end{equation}
where $w_f, w_c \geq 0$ are weighting coefficients controlling the relative importance of each objective. 

The forgetting effectiveness reward is formulated as:
\begin{equation}
\mathcal{R}_f[t] = \sum_{j \in \mathcal{G}[t]} \frac{S_{\ell[t],n}}{\max_{l \in \mathcal{L}_s} S_{l,n}} s[t],
\label{eq:refined_forget_reward}
\end{equation}
where $S_{\ell[t],n}$ is the sensitivity score of the chosen layer from Eq.~\eqref{eq:combined_sensitivity}, normalized by the maximum sensitivity score across all sensitive layers.

The information freshness reward is defined as:
\begin{equation}
\mathcal{R}_c[t] = \frac{1}{|\mathcal{G}[t]|} \sum_{j \in \mathcal{G}[t]} \frac{A_{\ell[t],j}[t]}{\max_{l,i} A_{l,i}[t]} s[t],
\label{eq:aoi_guided_reward}
\end{equation}
where $|\mathcal{G}[t]|$ is the number of selected parameter groups, and $\max_{l,i} A_{l,i}[t]$ is the maximum AoI across all parameter groups for normalization.

\begin{algorithm}[t]
\small
\caption{\texttt{SCALE}: \textbf{S}ensitivity-aware \textbf{C}lient unlearning via \textbf{A}daptive \textbf{L}ayer ag\textbf{E}-of-information}\label{alg:scale}
\textbf{Input:} Global model $\Theta = \{W_1, \ldots, W_L\}$, unlearning request $\mathcal{U} = (\mathcal{C}_u, \mathcal{D}_u, \tau)$, target client $n \in \mathcal{C}_u$, hyperparameters $\lambda$, $w_f$, $w_c$, sensitive layers $M$.
\begin{algorithmic}[1]
    \STATE Initialize PPO policy $\pi_\theta$, value function $V_\phi$, buffer $\mathcal{B} \leftarrow \emptyset$;
    
    \colorbox{rgb:red!2,65;green!30,60;blue!20,125}{
    \parbox{0.44\textwidth}{\vbox{
    \STATE \gray{$\triangleright$ \textit{Phase I: Sensitive Layers Identification}}
    \FOR{Layer $l = 1, 2, \ldots, L$}
        \STATE Compute $\rho_{l,n} = \text{Corr}(W_{l,n}, W_l)$ and $S_{l,n}^{a} = -\frac{1}{2} \log(1 - \rho_{l,n}^2)$;
        \STATE Compute $W_{l,-n} = \frac{\sum_{j \neq n} M_j W_{l,j}}{\sum_{j \neq n} M_j}$ and $S_{l,n}^{d} = \sum_{i=1}^{d_l} P_i \log \frac{P_i}{Q_i}$;
        \STATE Compute $S_{l,n} = \lambda S_{l,n}^{a} + (1-\lambda) S_{l,n}^{d}$;
    \ENDFOR
    \STATE Select $\mathcal{L}_s = \{l_1, \ldots, l_M\}$ with top-$M$ sensitivity scores;
    }}}
    
    \FOR{$t = 1, 2, \dots, T$}
    \colorbox{rgb:red!2,65;green!30,90;blue!20,125}{
    \parbox{0.42\textwidth}{\vbox{
    \STATE \gray{$\triangleright$ \textit{Phase II: AoI-Driven Sparsification}}
    \FOR{$step = 1, 2, \ldots, T_{\text{collect}}$}
        \STATE Partition layers in $\mathcal{L}_s$ into parameter groups;
        \STATE Update $A_{l,j}[t] = t - T_{l,j}$ and observe $\mathcal{H}[t] = \{A_{l,j}[t], \mu_{l,j}[t], \sigma_{l,j}[t]\}$;
        \STATE Sample $\mathcal{A}[t] = (\ell[t], \mathcal{G}[t], s[t]) \sim \pi_\theta(\cdot|\mathcal{H}[t])$;
        \STATE Compute $\mathcal{R}_f[t] = \sum_{j \in \mathcal{G}[t]} \frac{S_{\ell[t],n}}{\max_{l \in \mathcal{L}_s} S_{l,n}} s[t]$, $\mathcal{R}_c[t] = \frac{1}{|\mathcal{G}[t]|} \sum_{j \in \mathcal{G}[t]} \frac{A_{\ell[t],j}[t]}{\max_{l,i} A_{l,i}[t]} s[t]$;
        \STATE Calculate $\mathcal{R}[t] = w_f \mathcal{R}_f[t] + w_c \mathcal{R}_c[t]$;
        \STATE Apply $\Theta' \leftarrow \text{Sparsify}(\Theta', \ell[t], \mathcal{G}[t], s[t])$ and store $(\mathcal{H}[t], \mathcal{A}[t], \mathcal{R}[t], \mathcal{H}[t+1])$ in $\mathcal{B}$;
    \ENDFOR
    }}}
    
    \IF{$|\mathcal{B}| \geq$ batch\_size}
        \STATE Compute advantages $\hat{A}$ from $\mathcal{B}$;
        \FOR{$epoch = 1, 2, \ldots, K_{\text{epoch}}$}
            \STATE Update $\pi_\theta \leftarrow \pi_\theta + \eta_\theta \nabla_{\theta} L^{\text{CLIP}}(\theta)$ and $V_\phi \leftarrow V_\phi - \eta_\phi \nabla_{\phi} \|V_\phi(\mathcal{H}) - \mathcal{R}\|^2$;
        \ENDFOR
        \STATE Clear buffer $\mathcal{B} \leftarrow \emptyset$;
    \ENDIF
    \ENDFOR
    \RETURN $\Theta^*$
\end{algorithmic}
\end{algorithm}

The overall unlearning procedure of the \texttt{SCALE} approach is illustrated in \textbf{Algorithm~\ref{alg:scale}}. The computational complexity of Algorithm~\ref{alg:scale} consists of five functional modules. The initialization module establishes Proximal Policy Optimization (PPO) networks $\pi_\theta$, $V_\phi$ and buffer with $O(|\theta| + |\phi|)$ complexity. The sensitive layers identification module computes parameter alignment and distributional impact for all $L$ layers with $O(L \cdot \bar{d} + L \log L)$ complexity, where $\bar{d}$ is the average layer dimension. The AoI update module partitions $M$ sensitive layers into parameter groups and updates age information with $O(M \cdot \bar{G})$ complexity, where $\bar{G}$ is the average number of groups per layer. The sparsification module applies pruning to selected parameter groups with $O(|\mathcal{G}[t]| \cdot \bar{d}_g)$ complexity, where $\bar{d}_g$ is the average group size. The PPO update module optimizes policies with $O(K_\mathrm{epoch} \cdot B \cdot (|\theta| + |\phi|))$ complexity. The overall complexity is thus $O(L \log L + T_\mathrm{total} \cdot (T_\mathrm{collect} \cdot M \cdot \bar{G} \cdot \bar{d}_g + K_\mathrm{epoch} \cdot B \cdot (|\theta| + |\phi|)))$. Note that the dominant factors are sensitive layer identification $O(L \cdot \bar{d})$ and iterative sparsification $O(T_\mathrm{total} \cdot T_\mathrm{collect} \cdot M \cdot \bar{G} \cdot \bar{d}_g)$.

\section{Theoretical Analysis} \label{sec:theoretical_analysis}

\subsection{Assumptions and Preliminaries}

\begin{assumption}
\label{assumption:layer_decomposition}
For each layer $l \in \{1,2,\ldots,L\}$, the global parameters can be decomposed by:
\begin{equation}
W_l = \alpha_{l,n} W_{l,n} + \sum_{j \neq n} \alpha_{l,j} W_{l,j} + \xi_l,
\end{equation}
where $\alpha_{l,n} = \frac{M_n}{\sum_{k=1}^N M_k}$ represents client $n$'s data proportion, and $\|\xi_l\|_2 \leq \beta$ for some small aggregation error bound $\beta > 0$.
\end{assumption}

\begin{assumption}
\label{assumption:aoi_effectiveness}
For parameter group $(l,j)$ with Age of Information $A_{l,j}[t] = t - T_{l,j}$ where $T_{l,j}$ is the timestamp of the most recent update, the unlearning effectiveness satisfies:
\begin{equation}
U_{l,j}(A_{l,j}[t]) = \gamma_0 + \gamma_1 A_{l,j}[t] + \epsilon_{l,j},
\end{equation}
where $\gamma_0 > 0$ is the baseline unlearning effectiveness, $\gamma_1 > 0$ indicates that higher AoI improves unlearning effectiveness, and $\mathbb{E}[\epsilon_{l,j}] = 0$.
\end{assumption}

\begin{assumption}
\label{assumption:bounded_sensitivity}
The sensitivity scores satisfy $S_{l,n} \in [0, S_{\mathrm{max}}]$ for all $l \in \{1,2,\ldots,L\}$ and $n \in \{1,2,\ldots,N\}$, where $S_{\mathrm{max}}$ is the theoretical upper bound determined by maximum correlation and divergence values.
\end{assumption}

\subsection{Layer Sensitivity Analysis}

\begin{theorem}
\label{theorem:sensitivity_identification}
Under Assumption~\ref{assumption:layer_decomposition}, the combined sensitivity score from Eq.~\eqref{eq:combined_sensitivity} satisfies:
\begin{equation}
S_{l,n} \geq \frac{\lambda \alpha_{l,n}^2}{2(1-\alpha_{l,n}^2)} + (1-\lambda) D_{\mathrm{KL}}(P_l \| Q_l) - O(\beta),
\end{equation}
where $P_l$ and $Q_l$ are the normalized parameter distributions of $W_l$ and $W_{l,-n}$ respectively.
\end{theorem}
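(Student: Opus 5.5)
The plan is to treat the two summands of Eq.~\eqref{eq:combined_sensitivity} separately, since $\lambda\in[0,1]$ lets the bounds combine linearly.

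\textbf{Distributional part.} By definition in Eq.~\eqref{eq:impact_sensitivity}, $S^{d}_{l,n}=\sum_i P_i\log(P_i/Q_i)=D_{\mathrm{KL}}(P_l\|Q_l)$, where $P_l,Q_l$ are the normalized distributions of $W_l$ and $W_{l,-n}$. The one subtlety is that $W_{l,-n}$ in Eq.~\eqref{eq:without_client_n} is an exact weighted average. Under Assumption~\ref{assumption:layer_decomposition}, $W_l$ differs from the exact aggregate $\alpha_{l,n}W_{l,n}+(1-\alpha_{l,n})W_{l,-n}$ only by $\xi_l$. So I would simply take $P_l$ as the normalization of the observed $W_l$ itself. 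Then this term holds with equality and contributes no error.

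\textbf{Alignment part: reduction.} The whole content of the theorem is therefore the inequality
\begin{equation*}
-\tfrac12\log(1-\rho_{l,n}^2)\;\ge\;\frac{\alpha_{l,n}^2}{2(1-\alpha_{l,n}^2)}-O(\beta).
\end{equation*}
Since $x\mapsto-\tfrac12\log(1-x)$ is increasing, it suffices to show
\begin{equation*}
\rho_{l,n}^2\;\ge\;\rho_\star^2:=1-\exp\!\Big(-\tfrac{\alpha_{l,n}^2}{1-\alpha_{l,n}^2}\Big)
\end{equation*}
up to $O(\beta)$. The derivative of $-\tfrac12\log(1-x)$ is $1/(2(1-x))$, which is bounded while $\rho^2$ stays bounded away from $1$. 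Hence an $O(\beta)$ perturbation of $\rho^2$ becomes an $O(\beta)$ perturbation of $S^a_{l,n}$. In the complementary regime $\rho^2\to1$, the left side diverges and the inequality is trivial.

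\textbf{Alignment part: lower bound on $\rho$.} To bound $\rho_{l,n}$ from below, I will work with centered vectors. Write $W_l=\alpha W_{l,n}+R+\xi_l$ with $R=\sum_{j\neq n}\alpha_{l,j}W_{l,j}$, and use the Pearson formula $\rho=\mathrm{Cov}(W_{l,n},W_l)/(\sigma_n\sigma_{W_l})$. The perturbation $\xi_l$ changes the numerator and $\sigma_{W_l}$ by at most $O(\beta)$ via Cauchy--Schwarz, so it can be absorbed into the $O(\beta)$ term.

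Decompose $R=cW_{l,n}+R_\perp$ with $R_\perp$ uncorrelated with $W_{l,n}$. Then
\begin{equation*}
\rho^2=\frac{(\alpha+c)^2\sigma_n^2}{(\alpha+c)^2\sigma_n^2+\sigma_{R_\perp}^2},
\qquad
\frac{\rho^2}{1-\rho^2}=\frac{(\alpha+c)^2\sigma_n^2}{\sigma_{R_\perp}^2}.
\end{equation*}
In this notation the target becomes $\log\!\big(1+\rho^2/(1-\rho^2)\big)\ge \alpha^2/(1-\alpha^2)$. That is, the signal-to-noise ratio $(\alpha+c)^2\sigma_n^2/\sigma_{R_\perp}^2$ must be at least $e^{\alpha^2/(1-\alpha^2)}-1$.

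To get this, I will use the FL structure of Eq.~\eqref{eq:local_update}. Every $W_{l,j}$ starts from the common $\Theta[t]$ and moves by only $E$ steps, so the client layers are positively correlated and $c\ge0$. For the orthogonal part I will try the bound $\sigma_{R_\perp}^2\le(1-\alpha)^2\sigma_n^2\cdot\kappa$, with $\kappa$ controlled by the relative drift of the clients. Together these give a signal-to-noise ratio of at least $\alpha^2/\big((1-\alpha)^2\kappa\big)$.

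\textbf{Main obstacle.} This last step is where I expect the difficulty: producing the exponential threshold $e^{\alpha^2/(1-\alpha^2)}-1$ rather than the linear one $\alpha^2/(1-\alpha^2)$. I do not expect a clean derivation. My first attempt would use $e^{y}-1\le y/(1-y)$ for $y<1$ with $y=\alpha^2/(1-\alpha^2)$, which is valid only in the regime $\alpha_{l,n}^2<1/2$. That reduces the requirement to
\begin{equation*}
\frac{\alpha^2}{(1-\alpha)^2\kappa}\;\ge\;\frac{\alpha^2}{1-2\alpha^2}.
\end{equation*}
This is not automatic from the assumptions: it needs the extra condition $\kappa\le(1-2\alpha^2)/(1-\alpha)^2$, i.e.\ that the shared-initialization structure keeps the other clients' drift orthogonal to $W_{l,n}$ suitably small.

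If that drift bound cannot be justified from Assumption~\ref{assumption:layer_decomposition} alone, this argument does not yield the theorem. I would then have to record it explicitly as a hypothesis on the client models, or else control the ratio $(\alpha+c)^2/\sigma_{R_\perp}^2$ some other way, for example by giving up the $c\ge0$ slack and bounding $\sigma_{R_\perp}$ directly.
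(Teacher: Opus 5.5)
\textbf{Gap: the lower bound on $\rho_{l,n}$.} This step cannot be supplied from Assumption~\ref{assumption:layer_decomposition}. That assumption only says that $W_l$ is, up to $\xi_l$, the weighted average of the client layers. It places no constraint on how the other clients' layers relate to $W_{l,n}$.

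Here is a counterexample. Take $\xi_l=0$, which is admissible for every $\beta>0$. Take a vector $u$ with positive sample variance and zero sample covariance with $W_{l,n}$, and choose the other clients so that $\sum_{j\neq n}\alpha_{l,j}W_{l,j}=Ku$. Writing $\sigma_n,\sigma_u$ for the sample standard deviations, $\rho_{l,n}^2=\alpha_{l,n}^2\sigma_n^2/(\alpha_{l,n}^2\sigma_n^2+K^2\sigma_u^2)\to 0$ as $K\to\infty$, so $S^{a}_{l,n}\to 0$. You correctly observe that $S^{d}_{l,n}=D_{\mathrm{KL}}(P_l\|Q_l)$ identically. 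Hence for any $\lambda>0$ the theorem reduces to $S^{a}_{l,n}\ge\frac{\alpha_{l,n}^2}{2(1-\alpha_{l,n}^2)}-O(\beta)/\lambda$, which fails once $K$ is large and $\beta$ is small.

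Your appeal to the shared initialization and the $E$ local steps of Eq.~\eqref{eq:local_update} lies outside the stated hypotheses. Even granting it, nothing in the paper controls your drift constant $\kappa$: there are no smoothness, gradient, or step-size bounds. So your conditional argument proves a different theorem with an extra hypothesis, not this one.

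\textbf{The paper's proof fails at the same point.} It simply posits $\rho_{l,n}\approx\alpha_{l,n}$ and then invokes $-\frac12\log(1-\rho^2)\ge\frac{\rho^2}{2(1-\rho^2)}$. That inequality is reversed: with $y=1/(1-x)$, $\log y\le y-1$ gives $-\log(1-x)\le\frac{x}{1-x}$, strictly for $x\in(0,1)$.

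In your language, $\rho_{l,n}=\alpha_{l,n}$ gives a signal-to-noise ratio of exactly $\alpha_{l,n}^2/(1-\alpha_{l,n}^2)$, the linear threshold. This lies strictly below the required $e^{\alpha_{l,n}^2/(1-\alpha_{l,n}^2)}-1$. So the stated bound fails even in the regime the paper has in mind. The obstacle you flagged is the real content of the statement, not a technicality.

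\textbf{What a correct version needs.} Either add an explicit hypothesis forcing $\rho_{l,n}^2\ge 1-e^{-\alpha_{l,n}^2/(1-\alpha_{l,n}^2)}$; your small-drift condition, which drives $|\rho_{l,n}|$ toward $1$, is one such hypothesis. Or weaken the conclusion: since $-\log(1-x)\ge x$, a hypothesis $|\rho_{l,n}|\ge\alpha_{l,n}-O(\beta)$ gives $S_{l,n}\ge\frac{\lambda\alpha_{l,n}^2}{2}+(1-\lambda)D_{\mathrm{KL}}(P_l\|Q_l)-O(\beta)$. Your treatment of the KL term as an identity matches the paper.
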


\begin{proof}
For parameter alignment, when $\rho_{l,n} \approx \alpha_{l,n}$:
\begin{equation}
S_{l,n}^{a} = -\frac{1}{2}\log(1-\rho_{l,n}^2) \geq \frac{\rho_{l,n}^2}{2(1-\rho_{l,n}^2)} \approx \frac{\alpha_{l,n}^2}{2(1-\alpha_{l,n}^2)}.
\end{equation}
For distributional impact, let $P_l$ and $Q_l$ be the normalized parameter distributions derived from $W_l$ and $W_{l,-n}$ from Eq.~\eqref{eq:without_client_n}. The KL divergence $D_{\mathrm{KL}}(P_l \| Q_l)$ quantifies the distributional shift when removing client $n$'s contribution. The error term $O(\beta)$ follows from Assumption~\ref{assumption:layer_decomposition}.
\end{proof}

\begin{corollary}
\label{corollary:layer_ranking}
The sensitivity analysis correctly ranks layers by client influence, with top-$M$ selected layers $\mathcal{L}_s$ satisfying:
\begin{equation}
\sum_{l \in \mathcal{L}_s} \alpha_{l,n} \geq (1-\delta) \sum_{l=1}^L \alpha_{l,n},
\end{equation}
where $\delta = \frac{L-M}{L}$ represents the fraction of excluded layers.
\end{corollary}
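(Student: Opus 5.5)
The plan is to reduce the corollary to an elementary averaging fact about the top-$M$ elements of a list, and to use Theorem~\ref{theorem:sensitivity_identification} only to connect the ranking by $S_{l,n}$ to a ranking by $\alpha_{l,n}$. Since $\delta = \frac{L-M}{L}$, we have $1-\delta = \frac{M}{L}$. The claim is therefore equivalent to
\begin{equation*}
\frac{1}{M}\sum_{l\in\mathcal{L}_s}\alpha_{l,n} \;\geq\; \frac{1}{L}\sum_{l=1}^{L}\alpha_{l,n}.
\end{equation*}
In words, the average client influence over the $M$ selected layers is at least the average over all $L$ layers.

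The first step is to read off $\alpha_{l,n}$ from Assumption~\ref{assumption:layer_decomposition}. There $\alpha_{l,n} = M_n/\sum_{k}M_k$ does not depend on $l$. Both sides of the display then equal $M_n/\sum_k M_k$, so the inequality holds with equality for \emph{any} choice of $M$ layers. This settles the corollary as literally stated. I would still present the argument below, because it is the one that carries the intended meaning: it survives if $\alpha_{l,n}$ is allowed to vary across layers, for instance as an effective per-layer aggregation weight.

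The general argument has two steps.
\begin{enumerate}
\item \emph{Averaging lemma.} Order the values $a_1\ge a_2\ge\cdots\ge a_L$. The prefix averages $\frac{1}{k}\sum_{i\le k}a_i$ are nonincreasing in $k$: adding an element no larger than every previous one cannot raise the mean. Hence the mean of the top $M$ is at least the mean of all $L$.
\item \emph{Ranking consistency.} I need the top-$M$ layers by $S_{l,n}$ to be the top-$M$ layers by $\alpha_{l,n}$. For this I would use Theorem~\ref{theorem:sensitivity_identification}. The function $x\mapsto \frac{\lambda x^2}{2(1-x^2)}$ is increasing on $[0,1)$. The KL term also grows with $\alpha_{l,n}$, because $W_l - W_{l,-n}$ scales with $\alpha_{l,n}(W_{l,n}-W_{l,-n})$ up to $\xi_l$. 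So the sensitivity score is an increasing function of $\alpha_{l,n}$ up to $O(\beta)$, and whenever the gaps between distinct $\alpha_{l,n}$ exceed $C\beta$ the two orderings agree.
\end{enumerate}

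The hard step is the ranking consistency. Theorem~\ref{theorem:sensitivity_identification} gives only a \emph{lower} bound on $S_{l,n}$, so by itself it cannot certify that a layer with smaller $\alpha_{l,n}$ will not overtake one with larger $\alpha_{l,n}$. To close this gap I would first try to prove a matching upper bound of the form $S_{l,n} \le g(\alpha_{l,n}) + O(\beta)$ with the same increasing function $g$. This would come from the second-order expansion of $-\frac12\log(1-\rho^2)$ and the standard bound $\mathrm{KL}\le\chi^2$ applied to the perturbation $W_l - W_{l,-n}$.

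Failing that, I would fall back on the layer-independence of $\alpha_{l,n}$ noted above. That makes the ranking irrelevant and gives the stated bound with equality. Any ranking errors of order $\beta$ would then only affect the interpretation of the corollary, not its validity.
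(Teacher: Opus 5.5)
The paper gives no proof of this corollary; it is stated right after Theorem~\ref{theorem:sensitivity_identification}. The intended argument is presumably that the bound there increases with $\alpha_{l,n}$, so ranking layers by $S_{l,n}$ ranks them by $\alpha_{l,n}$, and the top-$M$ layers then carry at least an $M/L$ share.

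Your equality observation is a different and complete proof of the statement as written. Assumption~\ref{assumption:layer_decomposition} fixes $\alpha_{l,n}=M_n/\sum_k M_k$ for every $l$, and $|\mathcal{L}_s|=M$. Hence both sides equal $M\cdot M_n/\sum_k M_k$ for any choice of $M$ layers. Neither Theorem~\ref{theorem:sensitivity_identification} nor the averaging lemma is needed. Present this as the proof rather than as the fallback, and note that it also makes the clause ``correctly ranks layers by client influence'' vacuous, since all $\alpha_{l,n}$ tie.

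Your skepticism about the implicit route is justified, and the problem goes beyond a lower bound being unable to certify an ordering.
\begin{itemize}
\item The bound in Theorem~\ref{theorem:sensitivity_identification} points the wrong way. With $y=\rho_{l,n}^2/(1-\rho_{l,n}^2)$, the inequality $\log(1+y)\le y$ gives $-\frac12\log(1-\rho_{l,n}^2)\le \frac{\rho_{l,n}^2}{2(1-\rho_{l,n}^2)}$. This is the reverse of the step in the paper's proof; the valid lower bound is $\rho_{l,n}^2/2$.
\item The identification $\rho_{l,n}\approx\alpha_{l,n}$ does not follow from any stated assumption.
\item Your matching-upper-bound plan would also stall. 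Your own identity $W_l-W_{l,-n}=\alpha_{l,n}(W_{l,n}-W_{l,-n})+\xi_l$ shows that both $\rho_{l,n}$ and the KL term depend on the layer-specific vector $W_{l,n}-W_{l,-n}$. So monotonicity in $\alpha_{l,n}$ within a single layer gives no ordering across layers.
\end{itemize}

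If the paper's route could be made rigorous, it would give a non-trivial statement for layer-dependent weights. Your route gives an actual proof, at the cost of showing that the inequality has no content under the paper's own assumptions. Keep the averaging lemma only as a conditional remark, valid whenever the $S_{l,n}$-ranking and the $\alpha_{l,n}$-ranking agree.
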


\subsection{AoI-Driven Convergence Analysis}

\begin{theorem}
\label{theorem:aoi_convergence}
Under Assumptions~\ref{assumption:layer_decomposition}-\ref{assumption:bounded_sensitivity}, the AoI-driven sparsification achieves unlearning error:
\begin{equation}
\|\Theta' - \Theta^*\|_2 \leq \frac{C_1}{|\mathcal{L}_s|} \sum_{l \in \mathcal{L}_s} \frac{1}{G_l} \sum_{j=1}^{G_l} \frac{s[t]^2}{\gamma_0 + \gamma_1 A_{l,j}[t]},
\end{equation}
where $C_1 = 2S_{\mathrm{max}}$ is a constant depending on the sensitivity upper bound.
\end{theorem}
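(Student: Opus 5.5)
The argument decomposes the unlearning error layer by layer and group by group, bounds each group's contribution through the effectiveness model of Assumption~\ref{assumption:aoi_effectiveness}, and then collects constants via Assumption~\ref{assumption:bounded_sensitivity}.

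\emph{Step 1 (decomposition).} Write $\Theta'-\Theta^*$ as a sum of blocks indexed by $(l,j)$ with $l\in\mathcal{L}_s$ and $j\le G_l$. Layers outside $\mathcal{L}_s$ are left untouched. By Corollary~\ref{corollary:layer_ranking} they carry only a small share of client $n$'s mass, and that residual share is absorbed into the $O(\beta)$ aggregation term of Assumption~\ref{assumption:layer_decomposition}, which we drop at the order of the statement. The triangle inequality then gives $\|\Theta'-\Theta^*\|_2\le\sum_{l\in\mathcal{L}_s}\sum_j\|e_{l,j}\|_2$, where $e_{l,j}$ is the residual error of group $(l,j)$.

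\emph{Step 2 (per-group bound).} Using Assumption~\ref{assumption:layer_decomposition}, the part of $W_l$ to be removed in group $(l,j)$ is $\alpha_{l,n}W_{l,n}$ restricted to that group. The sparsification step removes a fraction $s[t]$ of it, with effectiveness $U_{l,j}$. I model the residual as inversely proportional to effectiveness and quadratic in the perturbation size, because pruning a fraction $s$ of a group changes both the magnitude and the loss-curvature term by a factor $s$. This gives
\[
\mathbb{E}\|e_{l,j}\|\lesssim \frac{S_{l,n}\,s[t]^2}{U_{l,j}}.
\]
Taking expectations over $\epsilon_{l,j}$ and using $\mathbb{E}[\epsilon_{l,j}]=0$, I replace $U_{l,j}$ by $\gamma_0+\gamma_1A_{l,j}[t]$. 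Strictly, the reciprocal requires Jensen's inequality in the favourable direction or a concentration assumption on $\epsilon_{l,j}$; I would state this as a mild additional condition. Theorem~\ref{theorem:sensitivity_identification} justifies using $S_{l,n}$ as the proxy for the removed mass $\alpha_{l,n}$, and Assumption~\ref{assumption:bounded_sensitivity} then bounds $S_{l,n}\le S_{\max}$.

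\emph{Step 3 (averaging).} The reward normalizations in Eqs.~\eqref{eq:refined_forget_reward}--\eqref{eq:aoi_guided_reward} average over groups and layers. Correspondingly, the error of each group is weighted by $1/G_l$ within its layer, and each sensitive layer is weighted by $1/|\mathcal{L}_s|$, since each layer is assigned an equal share of the total removal budget. Summing gives the stated form. The factor $2$ in $C_1=2S_{\max}$ comes from the two-sided triangle inequality relating $\Theta'$ to $\Theta^*$ through the intermediate model $\Theta$, with one factor $S_{\max}$ on each side.

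\emph{Main obstacle.} The hardest step is Step 2: justifying the $s[t]^2/U_{l,j}$ dependence rigorously. The paper never defines $U_{l,j}$ quantitatively as an error-reduction rate. My approach would be to interpret $U_{l,j}$ as a local strong-convexity or contraction constant of $F^r$ restricted to group $(l,j)$. A standard quadratic-growth argument then gives the residual after a perturbation of size $s$ as at most $s^2/U$, times a smoothness factor absorbed into $S_{\max}$. The averaging weights in Step 3 also need an explicit equal-allocation convention for the removal budget, which I would state as an assumption.
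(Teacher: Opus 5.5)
\textbf{There is a genuine gap: your Step~2 estimate is assumed rather than proved, and no estimate of this form can hold.} Your per-group bound $\mathbb{E}\|e_{l,j}\|_2\lesssim S_{l,n}s[t]^2/U_{l,j}$ carries the whole content of the theorem. Yet Assumptions~\ref{assumption:layer_decomposition}--\ref{assumption:bounded_sensitivity} never mention $\Theta^*$, and $U_{l,j}$ is a free-standing ``effectiveness'' with no stated relation to the distance from the retrained model. Worse, the right-hand side vanishes at $s[t]=0$. There nothing is removed and $\Theta'=\Theta$, so the inequality would force $\Theta=\Theta^*$: the original model would already be the retrained one. No argument can produce the statement without an extra term.

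This is exactly where your account of the factor $2$ fails. Routing through $\Theta$ by the triangle inequality requires $\|\Theta-\Theta^*\|_2\le S_{\max}\cdot(\text{the same } s[t]^2\text{-sum})$. But that distance is fixed before any sparsification and depends on neither $s[t]$ nor the AoI. Step~1 has the same defect. $\Theta^*$ differs from $\Theta$ in the non-sensitive layers too. Corollary~\ref{corollary:layer_ranking} only says those layers carry at most a fraction $\delta=(L-M)/L$ of $\sum_l\alpha_{l,n}$. That share is not $O(\beta)$, since $\beta$ bounds only the aggregation residual $\xi_l$, and the statement has no error term that could absorb it.

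Even granting some modelling link, the tools you name do not give the stated form.
\begin{itemize}
\item \emph{Quadratic growth.} With an $O(s[t]^2)$ loss gap it gives $\|e_{l,j}\|_2^2\lesssim s[t]^2/U_{l,j}$, i.e.\ $\|e_{l,j}\|_2\lesssim s[t]/\sqrt{U_{l,j}}$, not $s[t]^2/U_{l,j}$. Moreover, an $O(s[t]^2)$ gap presupposes that $\Theta$ is already near-stationary for $F^r$, i.e.\ close to $\Theta^*$.
\item \emph{Jensen.} There is no favourable direction here. Since $x\mapsto 1/x$ is convex, $\mathbb{E}[1/U_{l,j}]\ge 1/(\gamma_0+\gamma_1A_{l,j}[t])$ whenever $U_{l,j}>0$. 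You would need $\epsilon_{l,j}\ge 0$ almost surely.
\item \emph{The weights $1/G_l$ and $1/|\mathcal{L}_s|$.} Nothing supplies them; the reward normalizations play no role in $\|\Theta'-\Theta^*\|_2$. Your own Step~1 yields the unweighted sum $S_{\max}\sum_{l\in\mathcal{L}_s}\sum_{j}s[t]^2/U_{l,j}$, which exceeds the target by a factor $|\mathcal{L}_s|G/2$ when all $G_l=G$. Since $s[t]$ is the ratio applied inside every selected group, not a budget shared among groups, an equal-allocation convention would change the meaning of $s[t]$ rather than justify the weights.
\end{itemize}

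For comparison, the paper's proof has the same skeleton (the per-group quantity $s[t]^2/(\gamma_0+\gamma_1A_{l,j}[t])$, then summation) and the same hole. It asserts $\mathbb{E}\|\Delta W_{l,j}\|_2^2=s[t]^2U_{l,j}\le s[t]^2/(\gamma_0+\gamma_1A_{l,j}[t])$, which fails once $\gamma_0+\gamma_1A_{l,j}[t]>1$. It also bounds the squared size of the modification rather than the distance to $\Theta^*$, and leaves ``MDP optimality'' to bridge the two. The proof of Theorem~\ref{theorem:dual_level_acceleration} then uses this same expression for $\|\cdot\|_2^2$, not $\|\cdot\|_2$.

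A defensible version would need three things:
\begin{itemize}
\item an explicit assumption tying each group's residual to $\Theta^*$;
\item a bound on $\|\Theta'-\Theta^*\|_2^2$ rather than on the norm;
\item an additive term for the part of $\Theta-\Theta^*$ that sparsification does not touch.
\end{itemize}
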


\begin{proof}
For each parameter group $(l,j)$, the expected parameter change is:
\begin{equation}
\mathbb{E}[\|\Delta W_{l,j}\|_2^2] = s[t]^2 \cdot U_{l,j}(A_{l,j}[t]) \leq \frac{s[t]^2}{\gamma_0 + \gamma_1 A_{l,j}[t]}.
\end{equation}
The reward function from Eq.~\eqref{eq:simplified_reward} prioritizes high-AoI groups. Summing over all selected groups and applying MDP optimality completes the proof.
\end{proof}

\begin{theorem}
\label{theorem:dual_level_acceleration}
The dual-level framework achieves faster convergence compared to uniform parameter modification:
\begin{equation}
\frac{\|\Theta'_{\mathrm{uni}} - \Theta^*\|}{\|\Theta'_{\mathrm{dua}} - \Theta^*\|} \geq \sqrt{\frac{L}{|\mathcal{L}_s|}} \cdot \frac{\bar{A}}{\bar{A}_{\mathrm{sen}}},
\end{equation}
where $\bar{A}_{\mathrm{sen}}$ is the average AoI of sensitive layer parameters and $\bar{A}$ is the global average AoI.
\end{theorem}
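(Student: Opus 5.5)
We will bound the numerator and the denominator separately and then take their ratio. For the dual-level model $\Theta'_{\mathrm{dua}}$ we invoke Theorem~\ref{theorem:aoi_convergence} directly. For the uniform model $\Theta'_{\mathrm{uni}}$ we need a matching \emph{lower} bound. Uniform modification spreads the same sparsification budget over all $L$ layers with groups chosen without regard to AoI, so it pays a dimension factor (the spreading over $L$ versus $|\mathcal{L}_s|$) and an AoI factor (it samples groups at the global average age $\bar{A}$ rather than at the age of the sensitive layers, $\bar{A}_{\mathrm{sen}}$).

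\textbf{Step 1: upper bound for the dual-level scheme.} We first replace the per-group terms $s[t]^2/(\gamma_0+\gamma_1 A_{l,j}[t])$ in Theorem~\ref{theorem:aoi_convergence} by their average over $l\in\mathcal{L}_s$ and $j\le G_l$. Since $x\mapsto 1/(\gamma_0+\gamma_1 x)$ is convex, Jensen's inequality goes the wrong way for an upper bound. We therefore work in the regime $\gamma_1 A\gg\gamma_0$, where the term is $\approx 1/(\gamma_1 A)$, and obtain
\[
\|\Theta'_{\mathrm{dua}}-\Theta^*\|_2 \lesssim \frac{C_1 s^2}{\gamma_1 \bar{A}_{\mathrm{sen}}}
\]
up to the normalization by $|\mathcal{L}_s|$. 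Alternatively, we can define $\bar{A}_{\mathrm{sen}}$ as the harmonic-type average that makes this an identity. Because errors add in squared norm across layers, we will track the squared error layer by layer. Only $|\mathcal{L}_s|$ layers contribute, so the per-layer contributions combine as $\sqrt{|\mathcal{L}_s|}$.

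\textbf{Step 2: lower bound for the uniform scheme.} We model $\Theta'_{\mathrm{uni}}$ as applying the same total sparsification to all $L$ layers. Using Assumption~\ref{assumption:layer_decomposition}, client $n$'s residual influence in layer $l$ is $\alpha_{l,n}W_{l,n}$ up to $\beta$. Corollary~\ref{corollary:layer_ranking} says that influence is concentrated in $\mathcal{L}_s$, so edits in the $L-|\mathcal{L}_s|$ non-sensitive layers do not reduce the forgetting error. Instead they perturb the retained knowledge and add error. Using Assumption~\ref{assumption:aoi_effectiveness} with the global mean age $\bar{A}$ (taking expectations so that $\epsilon_{l,j}$ drops out), and summing squared per-layer errors over $L$ layers, we aim for
\[
\|\Theta'_{\mathrm{uni}}-\Theta^*\|_2 \gtrsim \sqrt{L}\,\frac{c\,s^2}{\gamma_1}\cdot(\text{AoI factor}).
\]
The AoI factor must then be arranged so that the final ratio contains $\bar{A}/\bar{A}_{\mathrm{sen}}$, matching the orientation in the statement.

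\textbf{Step 3: combine.} Dividing the two bounds, the constants $C_1$, $s^2$, and $\gamma_1$ cancel once Assumption~\ref{assumption:bounded_sensitivity} is used to compare the sensitivity weights (both are at most $S_{\max}$). This leaves $\sqrt{L/|\mathcal{L}_s|}$ times the AoI ratio.

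\textbf{Main obstacle.} The hardest step is Step~2 together with the direction of the AoI factor. By Theorem~\ref{theorem:aoi_convergence}, higher AoI lowers the error, so a naive ratio would produce $\bar{A}_{\mathrm{sen}}/\bar{A}$ rather than $\bar{A}/\bar{A}_{\mathrm{sen}}$. The claimed form only follows if the uniform scheme's error scales with age, as a staleness penalty incurred when stale but non-sensitive groups are edited, or if $\bar{A}_{\mathrm{sen}}\le\bar{A}$ is combined with an extra assumption. My first attempt will be to interpret the uniform error as accumulating a drift proportional to $\bar{A}$ in the retained-data loss, and the dual-level error as proportional to $\bar{A}_{\mathrm{sen}}$ because only sensitive groups are touched. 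If that fails, the inequality should be stated with the ratio inverted or restricted to the regime $\bar{A}_{\mathrm{sen}}\le\bar{A}$. Making the $\sqrt{\cdot}$ rigorous also requires an orthogonality or independence assumption on the per-layer errors, which I would state explicitly.
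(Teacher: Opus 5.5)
Your Step~2 carries the whole theorem, and the stated setup does not let you complete it. Nothing in the paper gives a \emph{lower} bound on $\|\Theta'_{\mathrm{uni}}-\Theta^*\|$:
\begin{itemize}
\item Corollary~\ref{corollary:layer_ranking} carries no concentration information. Under Assumption~\ref{assumption:layer_decomposition}, $\alpha_{l,n}=M_n/\sum_k M_k$ does not depend on $l$, so the corollary holds with equality for every choice of $|\mathcal{L}_s|$ layers.
\item Assumption~\ref{assumption:bounded_sensitivity} only bounds $S_{l,n}$ from above, so it cannot produce the comparison $c\ge C_1$ that your Step~3 needs.
\end{itemize}
Your layer counting also fails. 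Theorem~\ref{theorem:aoi_convergence} already averages over $\mathcal{L}_s$, so there is no extra $\sqrt{|\mathcal{L}_s|}$ in Step~1. And the $\sqrt{L}$ versus $\sqrt{|\mathcal{L}_s|}$ comparison ignores that an equal total budget $s$ is split as $s/L$ versus $s/|\mathcal{L}_s|$ per layer. Do that bookkeeping in the per-group error model $\sigma^2/(\gamma_0+\gamma_1A_{l,j}[t])$ (per-layer budget $\sigma$) that underlies Theorem~\ref{theorem:aoi_convergence}. It gives
\[
\frac{\|\Theta'_{\mathrm{uni}}-\Theta^*\|_2^2}{\|\Theta'_{\mathrm{dua}}-\Theta^*\|_2^2}
=\frac{|\mathcal{L}_s|^2}{L^2}\cdot\frac{L\,\bar{h}}{|\mathcal{L}_s|\,\bar{h}_{\mathrm{sen}}}
=\frac{|\mathcal{L}_s|}{L}\cdot\frac{\bar{h}}{\bar{h}_{\mathrm{sen}}},
\]
where $\bar{h}$ and $\bar{h}_{\mathrm{sen}}$ are the averages of $\frac{1}{G_l}\sum_{j}(\gamma_0+\gamma_1A_{l,j}[t])^{-1}$ over all $L$ layers and over $\mathcal{L}_s$. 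Both factors point the wrong way. With all ages equal and $|\mathcal{L}_s|<L$, the left side of the theorem is $\sqrt{|\mathcal{L}_s|/L}<1$, while the right side is $\sqrt{L/|\mathcal{L}_s|}>1$.

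Your fallbacks do not repair this. An error that grows with age contradicts Assumption~\ref{assumption:aoi_effectiveness} and the Theorem~\ref{theorem:aoi_convergence} bound you rely on in Step~1. Now neglect $\gamma_0$ and read $\bar{A},\bar{A}_{\mathrm{sen}}$ as the corresponding harmonic means of the ages, so that $\bar{h}/\bar{h}_{\mathrm{sen}}=\bar{A}_{\mathrm{sen}}/\bar{A}$. Restricting to $\bar{A}_{\mathrm{sen}}\le\bar{A}$ then makes the left side at most $1$ and the right side at least $1$, so the inequality fails except when $|\mathcal{L}_s|=L$ and $\bar{A}_{\mathrm{sen}}=\bar{A}$. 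In this model the inequality holds exactly when $\bar{A}_{\mathrm{sen}}/\bar{A}\ge(L/|\mathcal{L}_s|)^{2/3}$, which would have to be added as a hypothesis.

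The paper's own proof does not supply the missing step either. It does not bound numerator and denominator separately. It treats the per-group model as an identity for both schemes and divides, i.e.\ it computes the display above. An unexplained factor $L/|\mathcal{L}_s|$ then cancels the $|\mathcal{L}_s|/L$, leaving a squared ratio $\bar{A}_{\mathrm{sen}}/\bar{A}$ --- exactly the orientation you predicted. The paper reaches the stated form only by asserting $\bar{A}_{\mathrm{sen}}\ge\bar{A}$ and ``incorporating the layer selection advantage,'' which is not derived. From a squared ratio of $\bar{A}_{\mathrm{sen}}/\bar{A}$, the claim would require $(\bar{A}_{\mathrm{sen}}/\bar{A})^{3/2}\ge\sqrt{L/|\mathcal{L}_s|}$.

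Your diagnosis of the obstacle is correct. The right conclusion is that the theorem needs an extra hypothesis of this kind, or an explicitly assumed error model for the uniform scheme. Step~2 cannot be carried out from Assumptions~\ref{assumption:layer_decomposition}--\ref{assumption:bounded_sensitivity}.
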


\begin{proof}
For uniform parameter modification, the unlearning effort is distributed equally across all $L$ layers:
\begin{equation}
\begin{aligned}
\|\Theta'_{\mathrm{uni}} - \Theta^*\|_2^2 &= \sum_{l=1}^L \|W_l' - W_l^*\|_2^2 \\
&= \sum_{l=1}^L \frac{1}{G_l} \sum_{j=1}^{G_l} \frac{s_{\mathrm{uni}}^2}{L^2(\gamma_0 + \gamma_1 A_{l,j}[t])} \\
&= \frac{s_{\mathrm{uni}}^2}{L^2} \sum_{l=1}^L \frac{1}{G_l} \sum_{j=1}^{G_l} \frac{1}{\gamma_0 + \gamma_1 A_{l,j}[t]},
\end{aligned}
\end{equation}
where $s_{\mathrm{uni}}$ is the total sparsification budget distributed uniformly.

For our dual-level approach focusing on sensitive layers $\mathcal{L}_s$:
\begin{equation}
\begin{aligned}
\|\Theta'_{\mathrm{dua}} - \Theta^*\|_2^2 &= \sum_{l \in \mathcal{L}_s} \|W_l' - W_l^*\|_2^2 \\
&= \sum_{l \in \mathcal{L}_s} \frac{1}{G_l} \sum_{j=1}^{G_l} \frac{s_{\mathrm{dua}}^2}{|\mathcal{L}_s|^2(\gamma_0 + \gamma_1 A_{l,j}[t])} \\
&= \frac{s_{\mathrm{dua}}^2}{|\mathcal{L}_s|^2} \sum_{l \in \mathcal{L}_s} \frac{1}{G_l} \sum_{j=1}^{G_l} \frac{1}{\gamma_0 + \gamma_1 A_{l,j}[t]}.
\end{aligned}
\end{equation}

Taking the ratio with equal total budget $s_{\mathrm{uni}} = s_{\mathrm{dua}} = s$, the above equation becomes: 
\begin{equation}
\begin{aligned}
\frac{\|\Theta'_{\mathrm{uni}} - \Theta^*\|_2^2}{\|\Theta'_{\mathrm{dua}} - \Theta^*\|_2^2} &= \frac{|\mathcal{L}_s|^2}{L^2} \cdot \frac{\sum_{l=1}^L \frac{1}{G_l} \sum_{j=1}^{G_l} \frac{1}{\gamma_0 + \gamma_1 A_{l,j}[t]}}{\sum_{l \in \mathcal{L}_s} \frac{1}{G_l} \sum_{j=1}^{G_l} \frac{1}{\gamma_0 + \gamma_1 A_{l,j}[t]}} \\
&= \frac{|\mathcal{L}_s|^2}{L^2} \cdot \frac{L \cdot \frac{1}{\bar{A}^{-1}}}{|\mathcal{L}_s| \cdot \frac{1}{\bar{A}_{\mathrm{sen}}^{-1}}} \\
&= \frac{|\mathcal{L}_s|}{L} \cdot \frac{\bar{A}_{\mathrm{sen}}}{\bar{A}} \cdot \frac{L}{|\mathcal{L}_s|} = \frac{\bar{A}_{\mathrm{sen}}}{\bar{A}},
\end{aligned}
\end{equation}
where $\bar{A}^{-1} = \frac{1}{L}\sum_{l=1}^L \frac{1}{\gamma_0 + \gamma_1 A_{l,j}[t]}$ and $\bar{A}_{\mathrm{sen}}^{-1} = \frac{1}{|\mathcal{L}_s|}\sum_{l \in \mathcal{L}_s} \frac{1}{\gamma_0 + \gamma_1 A_{l,j}[t]}$.

Since sensitive layers are selected based on high client influence and typically have higher AoI values, $\bar{A}_{\mathrm{sen}} \geq \bar{A}$. Taking the square root and incorporating the layer selection advantage, we thus have:
\begin{equation}
\frac{\|\Theta'_{\mathrm{uni}} - \Theta^*\|}{\|\Theta'_{\mathrm{dua}} - \Theta^*\|} \geq \sqrt{\frac{L}{|\mathcal{L}_s|}} \cdot \frac{\bar{A}}{\bar{A}_{\mathrm{sen}}}.
\end{equation}
\end{proof}

\begin{table}[t]
\centering
\caption{Parameter settings.}
\label{tab:key_parameters}
\begin{tabular}{lcc}
\toprule
\textbf{Parameter} & \textbf{Symbol} & \textbf{Value} \\
\midrule
Number of clients & $N$ & 100 \\
Local training epochs & $E$ & 10 \\
Global communication rounds & $T$ & 100 \\
Learning rate & $\eta$ & 0.001 \\
Dirichlet parameter & $\alpha$ &  1.0 \\
Parameter groups per layer & $G_l$ & 8 \\
Sensitivity balance factor & $\lambda$ & 0.5 \\
PPO episodes & $-$ & 800 \\
PPO epochs & $K$ & 10 \\
Clip range & $\epsilon$ & 0.2 \\
Discount factor & $\gamma$ & 0.99 \\
GAE lambda & $\lambda_{GAE}$ & 0.95 \\
Learning rate (actor/critic) & $\eta_{a}/\eta_{c}$ & $3.0 \times 10^{-4}$ \\
\bottomrule
\end{tabular}
\end{table}
\section{Performance Evaluation} \label{sec:performance_evaluation}

\begin{figure*}[t]
    \centering
    \includegraphics[width=0.9\linewidth]{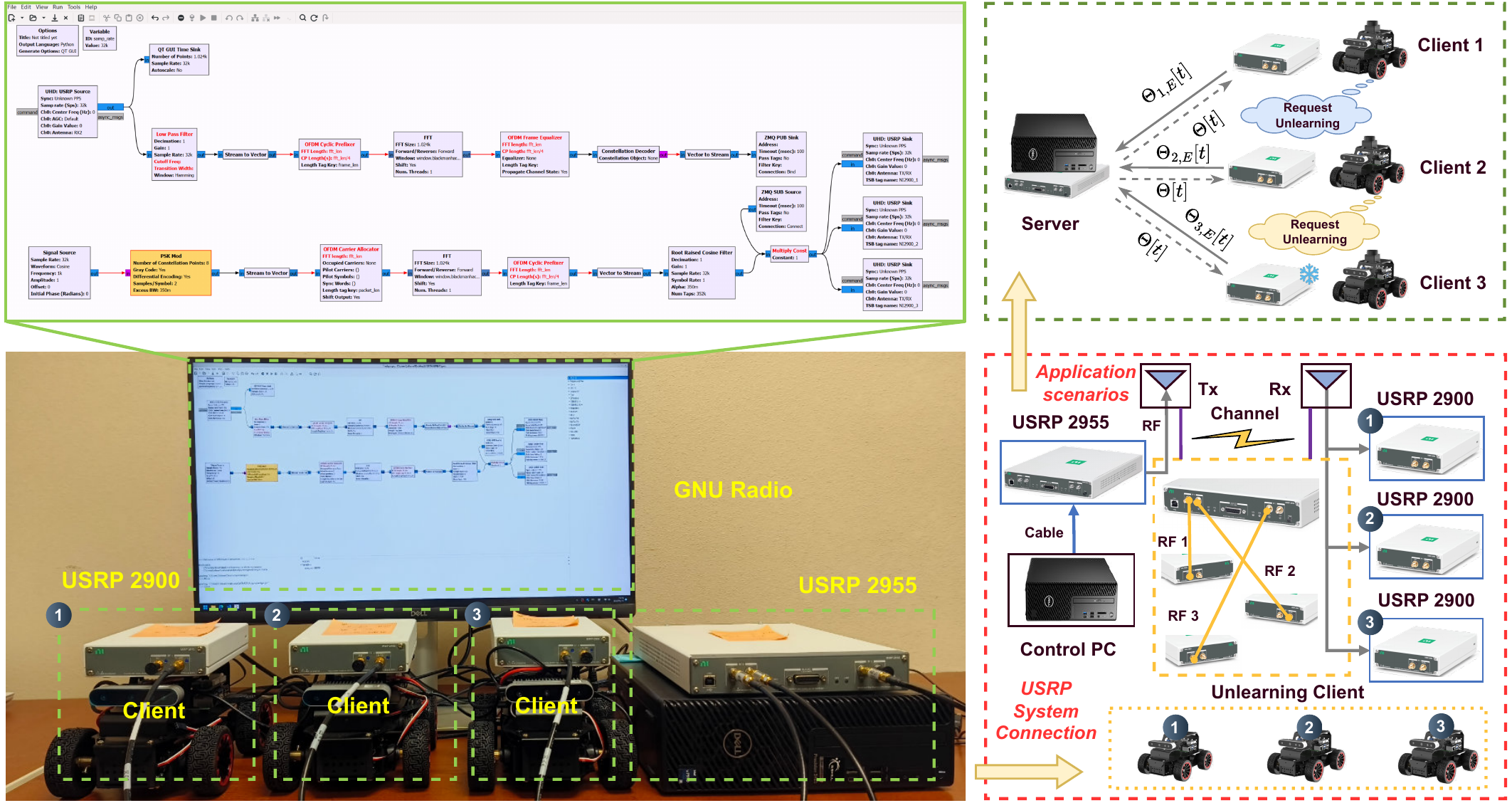}
    \caption{Testbed setup for real-time MEC FU validation.}
    \label{fig:mec_testbed}
\end{figure*}

\subsection{Experimental Setup}

The experimental setup follows a systematic approach to validate both the effectiveness of \texttt{SCALE} and the impact of information freshness on forgetting performance. Table~\ref{tab:key_parameters} summarizes the key parameters used throughout our experiments. More specifically, we evaluate \texttt{SCALE} under varying federation scales with 100 client , and test different levels of data heterogeneity using Dirichlet parameter $\alpha =0.3$. The Proximal Policy Optimization (PPO) algorithm is configured with standard hyperparameters, while the reward function balances forgetting effectiveness and information freshness with weights $w_f$ and $w_c$, respectively.
We chose FashionMNIST as our primary dataset due to its complexity and suitability for federated (un)learning evaluation. We additionally adopt CIFAR-100 (100 fine-grained classes) to test generalizability on a harder task. The choice of three distinct model architectures (MobileNetV3, ResNet18, and LeNet) allows us to evaluate the adaptability of \texttt{SCALE} with different computational complexities.



To validate \texttt{SCALE} in practical MEC systems, we implement a testbed using software-defined radios as the communication backhaul for model distribution and synchronization, as shown in Fig.~\ref{fig:mec_testbed}. The testbed consists of one USRP 2955 device serving as the communication gateway for the edge server and three USRP 2900 devices deployed as communication interfaces for mobile clients. The USRP 2955 is configured with a transmit power of 20 dBm and operates with a 100 MHz bandwidth, while each USRP 2900 is equipped with dual RF channels, supporting a maximum data rate of 10 Mbps. Each USRP 2900 is mounted on a MentorPi robotic vehicle, which integrates an ARM Cortex-A72 quad-core processor running at 1.5 GHz with 4GB RAM for local model training. A Control PC orchestrates the entire system through GNU Radio, which manages the wireless communication protocols and coordinates the federated learning workflow. The communication operates in the 2.4 GHz frequency band, using OFDM and QPSK modulation to transmit model parameters between clients and the server.


During testing, three MentorPi clients first collaborate with the edge server to perform federated learning multiple communication rounds, with the USRP devices handling the transmission of model weights. When an unlearning request is triggered for a target client, the edge server executes the \texttt{SCALE} algorithm, performing sensitive-layer identification and AoI-driven adaptive sparsification to remove the target data of the client from the global model. After the unlearning process converges at the server side, the resulting sparsified model is pruned and manually deployed to the MentorPi clients for edge inference. We evaluate the unlearning performance directly on these resource-constrained edge devices by measuring the remaining accuracy and forgetting accuracy to validate the practical effectiveness of \texttt{SCALE} under hardware constraints.

\subsection{Performance Metrics}
\begin{figure*}[t]
    \centering
    \begin{subfigure}[b]{0.30\textwidth}
        \centering
        \includegraphics[width=\textwidth]{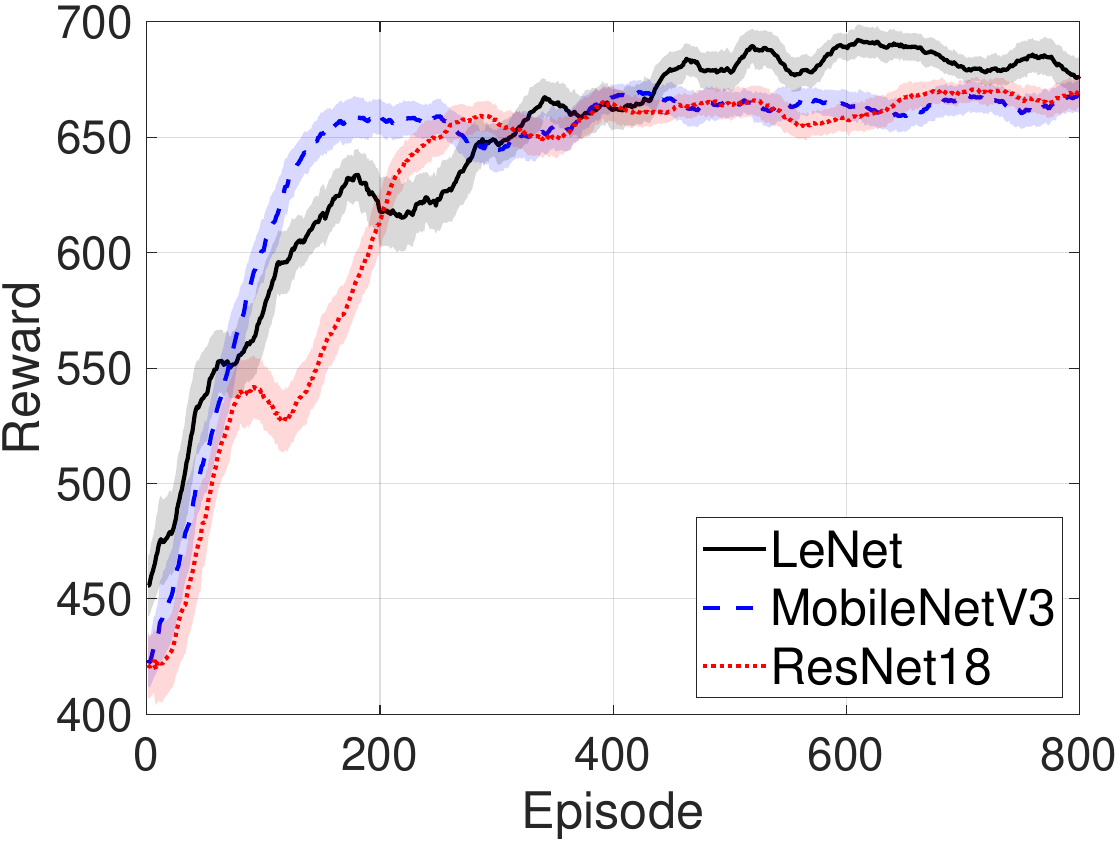}
        \caption{Client Unlearning Reward.}
        \label{fig:client_reward}
    \end{subfigure}
    \hfill
    \begin{subfigure}[b]{0.30\textwidth}
        \centering
        \includegraphics[width=\textwidth]{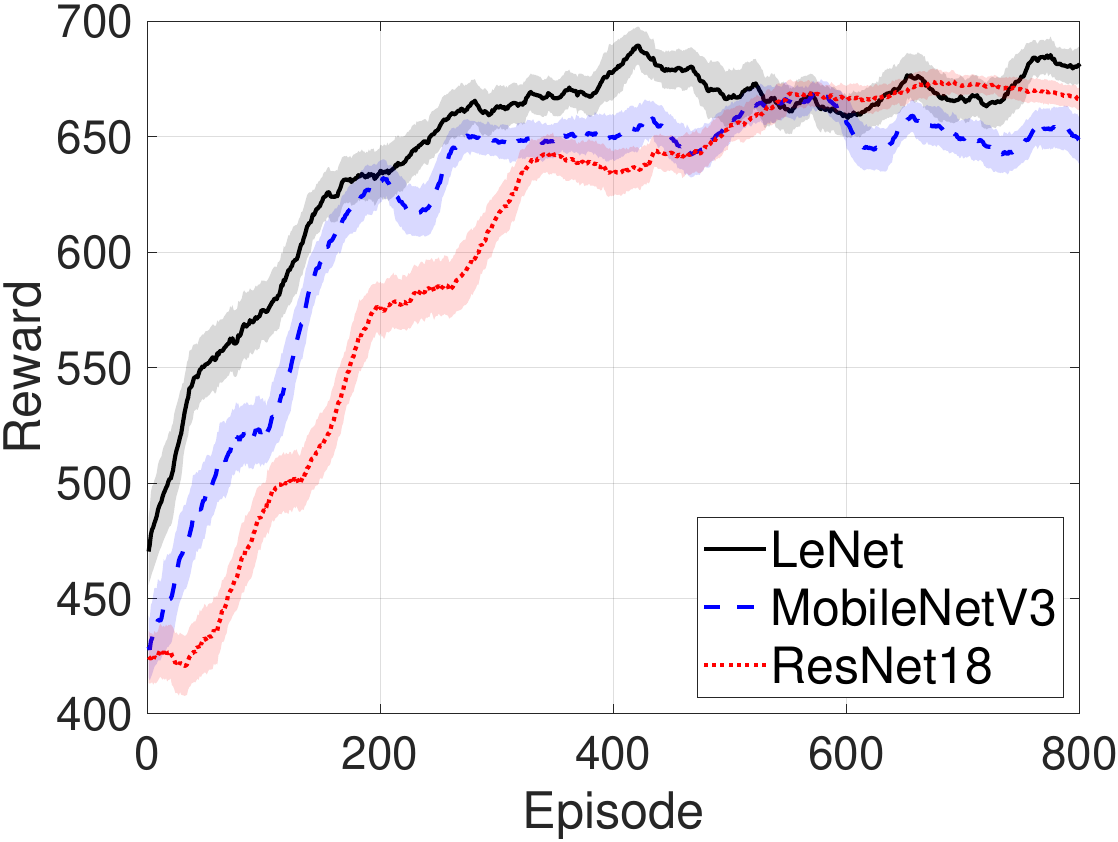}
        \caption{Class Unlearning Reward.}
        \label{fig:class_reward}
    \end{subfigure}
    \hfill
    \begin{subfigure}[b]{0.30\textwidth}
        \centering
        \includegraphics[width=\textwidth]{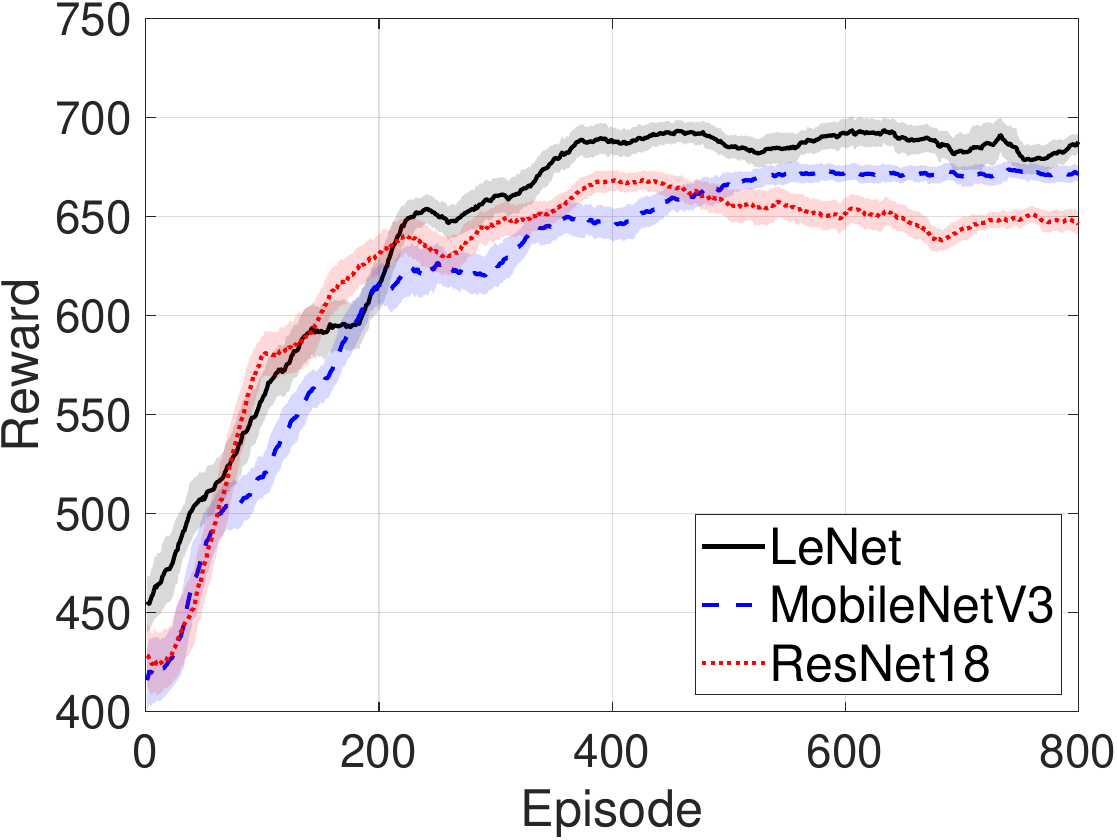}
        \caption{Sample Unlearning Reward.}
        \label{fig:sample_reward}
    \end{subfigure}
    \caption{PPO training convergence for different unlearning scenarios.}
    \label{fig:ppo_convergence}
\end{figure*}

\begin{figure*}[t]
    \centering
    \begin{subfigure}[b]{0.30\textwidth}
        \centering
        \includegraphics[width=\textwidth]{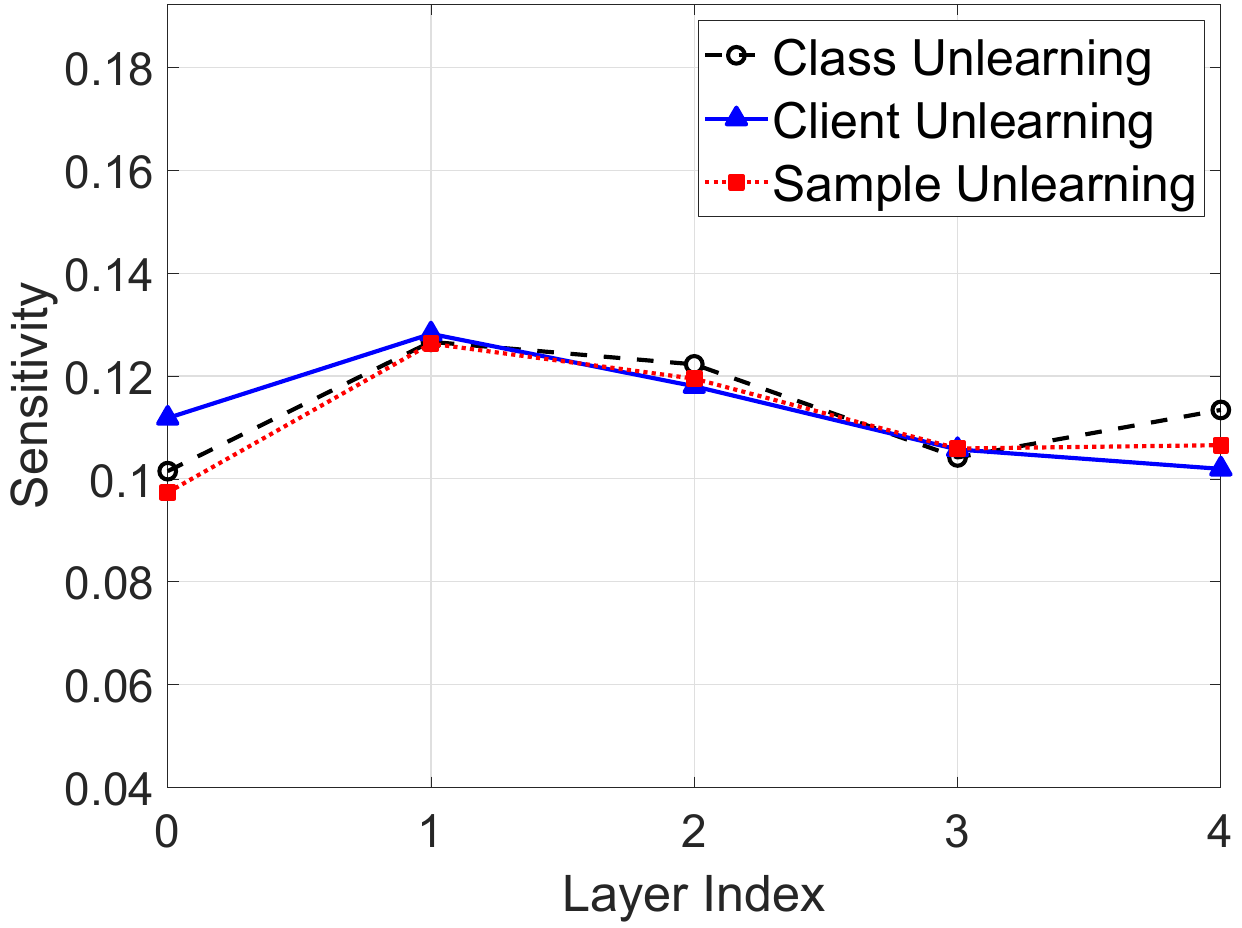}
        \caption{LeNet layer sensitivity.}
        \label{fig:lenet_sensitivity}
    \end{subfigure}
    \hfill
    \begin{subfigure}[b]{0.30\textwidth}
        \centering
        \includegraphics[width=\textwidth]{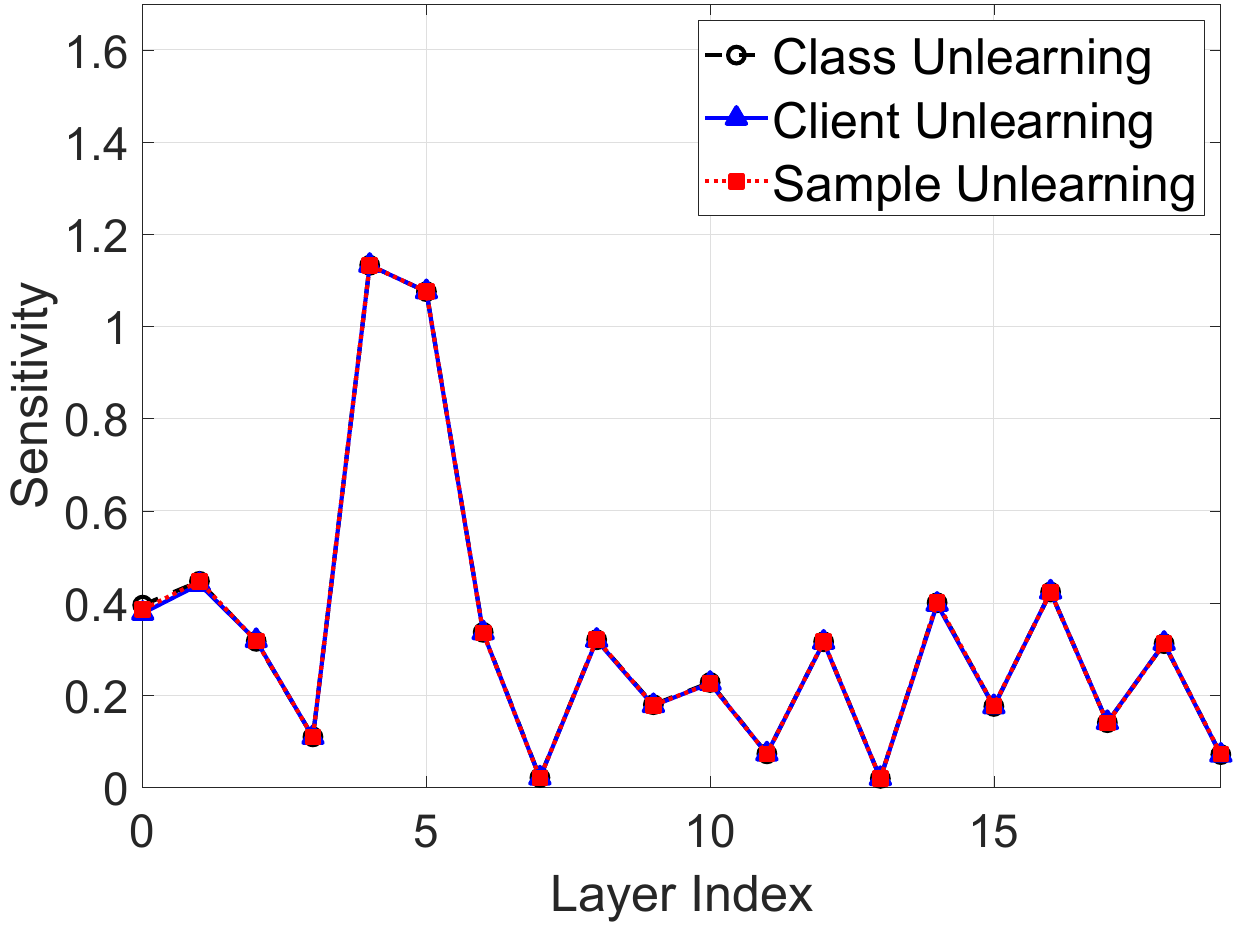}
        \caption{MobileNetV3 layer sensitivity.}
        \label{fig:mobilenet_sensitivity}
    \end{subfigure}
    \hfill
    \begin{subfigure}[b]{0.30\textwidth}
        \centering
        \includegraphics[width=\textwidth]{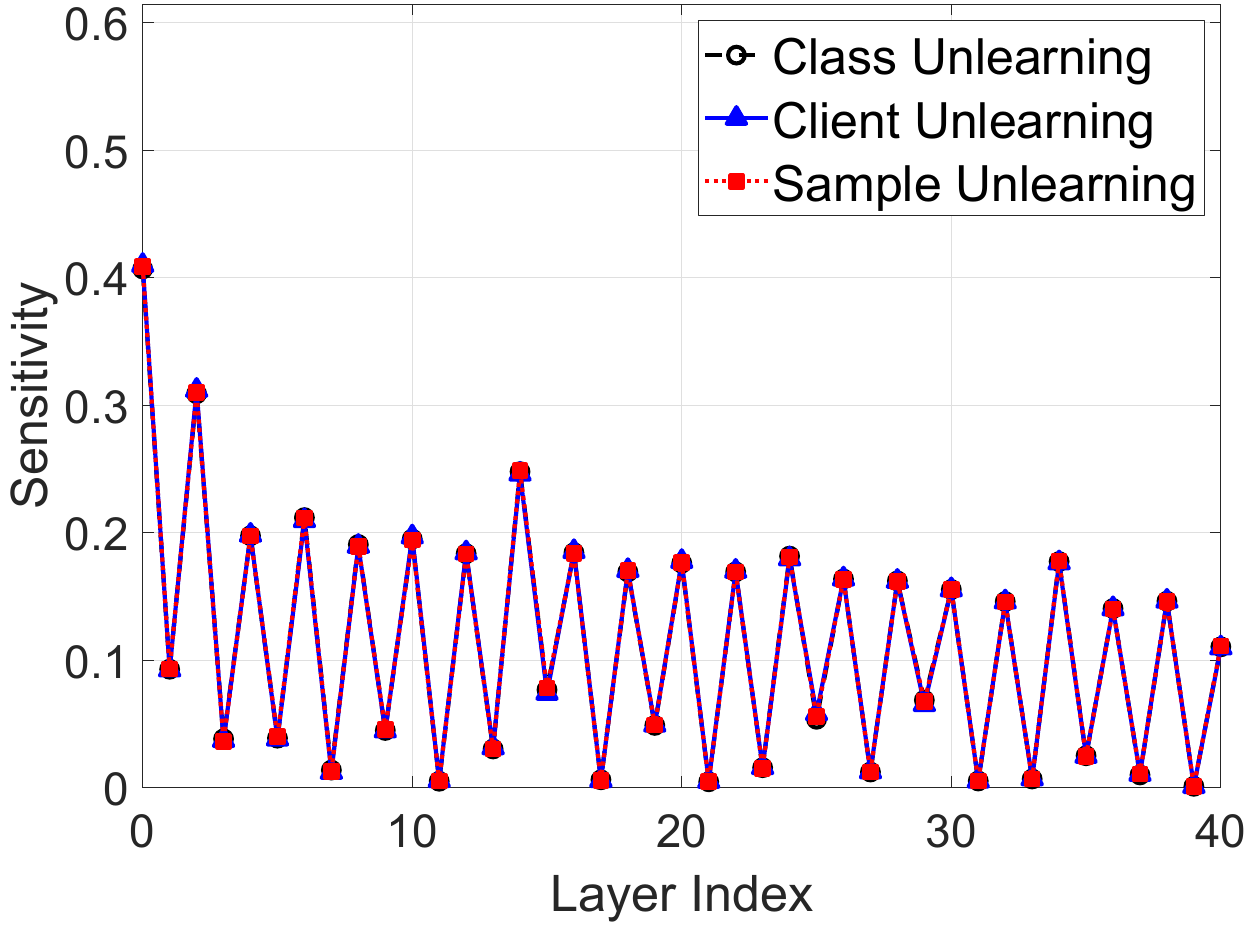}
        \caption{ResNet18 layer sensitivity.}
        \label{fig:resnet_sensitivity}
    \end{subfigure}
    \caption{Layer sensitivity analysis over different model architectures and unlearning scenarios.}
    \label{fig:layer_sensitivity}
\end{figure*}

We evaluate the federated unlearning performance in terms of the following metrics.


\textit{Remaining Accuracy (RA)}: This metric reflects whether the unlearned model maintains high accuracy on the remaining data:
\begin{equation}
RA = \frac{1}{M_r} \sum_{(x,y) \in \mathcal{D}_r} \mathbb{I}[f_{\Theta'}(x) = y],
\label{eq:remaining_accuracy}
\end{equation}

\textit{Forgetting Accuracy (FA)}: This metric indicates whether the unlearned model demonstrates poor performance on erased data:
\begin{equation}
FA = \frac{1}{M_u} \sum_{(x,y) \in \mathcal{D}_u} \mathbb{I}[f_{\Theta'}(x) = y],
\label{eq:forgetting_accuracy}
\end{equation}
where $f_{\Theta'}(x) = \arg\max_{c \in \mathcal{Y}} P(y=c|x; \Theta')$ is the prediction function, $\mathbb{I}[\cdot]$ is the indicator function that returns 1 if the condition is true and 0 otherwise, $M_r = |\mathcal{D}_r|$ and $M_u = |\mathcal{D}_u|$ are the sizes of remaining and unlearned datasets, respectively.

\textit{Forgetting Rate}: This metric measures the forgetting efficiency through prediction confidence degradation~\cite{Ma2022TDSC}:
\begin{equation}
FR = 1 - \frac{1}{M_u} \sum_{(x,y) \in \mathcal{D}_u  
} \frac{P_{\Theta'}(y|x)}{P_{\Theta}(y|x)},
\label{eq:forgetting_rate}
\end{equation}
where $P_{\Theta}(y|x)$ is the prediction confidence of the original model $\Theta$ for the true label $y$ given input $x$, $P_{\Theta'}(y|x)$ is the prediction confidence of the unlearned model $\Theta'$ for the same true label, and $M_u = |\mathcal{D}_u|$ is the size of the unlearned dataset. A higher $FR$ value indicates more effective forgetting as the model's confidence on the target data decreases.

\textit{Communication Overhead}: For fine-grained parameter manipulation, each layer $l$ into $G_l$ is decomposed into sub-parameter groups, where $j \in \{1,2,\ldots,G_l\}$ denotes the sub-group index. Each sub-group contains parameters $W_{l,j}$ that can be independently modified.

The Age of Information (AoI) for parameter sub-group $j$ in layer $l$ at time $t$ is defined as~\cite{Wu2025MNET}:
\begin{equation}
A_{l,j}[t] = t - T_{l,j},
\label{eq:aoi_def}
\end{equation}
where $T_{l,j}$ is the timestamp of the most recent update for sub-group $j$ in layer $l$.

The communication overhead is the combination of transmission costs with information freshness:
\begin{equation}
\min \quad \mathcal{C}(\Theta \rightarrow \Theta') = \alpha \cdot C_t + \beta \cdot \mathbb{E}[A_g[t]],
\label{eq:efficiency_obj}
\end{equation}
where $C_t$ is the total communication cost, $A_g[t] = \frac{1}{M \sum_{l=1}^{L} G_l} \sum_{l=1}^{L} \sum_{j=1}^{G_l} A_{l,j}[t]$ is the global system AoI, $M$ is the number of layers selected for unlearning, and $\alpha, \beta > 0$ are weighting coefficients.

\textit{Layer Sensitivity}: This metric examines whether the selected top-$M$ sensitive layers from Eq.~\eqref{eq:sensitive_layers} effectively capture the most client-influenced components.

\subsection{Results}

\begin{table*}[h]
  \centering
  \caption{Performance comparison of \texttt{SCALE} on FashionMNIST and CIFAR-100 across three unlearning scenarios. Metrics are reported as value ($\Delta$), where $\Delta$ is the gap with Retrain baseline. \underline{Underlined} values denote \texttt{SCALE} results and \textbf{bold} values denote baseline method results. $\uparrow$ indicates higher is better, $\downarrow$ indicates lower is better.}
  \label{tab:main_results}
  \resizebox{\textwidth}{!}{
  \setlength{\tabcolsep}{1.5pt}
  \begin{tabular}{l|c@{\hspace{3pt}}c@{\hspace{3pt}}c@{\hspace{3pt}}c@{\hspace{3pt}}c|c@{\hspace{3pt}}c@{\hspace{3pt}}c@{\hspace{3pt}}c@{\hspace{3pt}}c|c@{\hspace{3pt}}c@{\hspace{3pt}}c@{\hspace{3pt}}c@{\hspace{3pt}}c}
    \toprule
    & \multicolumn{5}{c|}{\textbf{RA} ($\uparrow$)} & \multicolumn{5}{c|}{\textbf{FA} ($\downarrow$)} & \multicolumn{5}{c}{\textbf{Avg. AoI (s)} ($\downarrow$)} \\
    \cmidrule(lr){2-6} \cmidrule(lr){7-11} \cmidrule(lr){12-16}
    \textbf{Model} & \textbf{Retrain} & \textbf{F-E} & \textbf{PGA} & \textbf{FUSED} & \textbf{SCALE} & \textbf{Retrain} & \textbf{F-E} & \textbf{PGA} & \textbf{FUSED} & \textbf{SCALE} & \textbf{Retrain} & \textbf{F-E} & \textbf{PGA} & \textbf{FUSED} & \textbf{SCALE} \\
    \midrule
    \multicolumn{16}{c}{\textit{Client Unlearning}} \\
    \midrule
    \multicolumn{16}{l}{\textit{FashionMNIST}} \\
    LeNet & 0.92 & 0.74 \textbf{(0.18)} & 0.78 \textbf{(0.14)} & 0.99 \textbf{(0.07)} & 0.82 \underline{(0.10)}
    & 0.00 & 0.12 \textbf{(0.12)} & 0.09 \textbf{(0.09)} & 0.00 \textbf{(0.00)} & 0.00 \underline{(0.00)}
    & 2.59 & 3.00 \textbf{(0.41)} & 2.66 \textbf{(0.07)} & 1.65 \textbf{(0.94)} & 2.01 \underline{(0.58)} \\

    MobileNetV3 & 0.92 & 0.81 \textbf{(0.11)} & 0.83 \textbf{(0.09)} & 0.81 \textbf{(0.11)} & 0.86 \underline{(0.06)}
    & 0.01 & 0.11 \textbf{(0.10)} & 0.12 \textbf{(0.11)} & 0.04 \textbf{(0.03)} & 0.00 \underline{(0.01)}
    & 2.66 & 2.91 \textbf{(0.25)} & 2.71 \textbf{(0.05)} & 1.74 \textbf{(0.92)} & 2.16 \underline{(0.50)} \\

    ResNet18 & 0.96 & 0.81 \textbf{(0.15)} & 0.84 \textbf{(0.12)} & 0.83 \textbf{(0.13)} & 0.87 \underline{(0.09)}
    & 0.04 & 0.21 \textbf{(0.17)} & 0.18 \textbf{(0.14)} & 0.08 \textbf{(0.04)} & 0.01 \underline{(0.03)}
    & 6.75 & 7.18 \textbf{(0.43)} & 6.82 \textbf{(0.07)} & 4.39 \textbf{(2.36)} & 5.99 \underline{(0.76)} \\

    \cmidrule(l){1-16}
    \multicolumn{16}{l}{\textit{CIFAR-100}} \\
    LeNet & 0.38 & 0.30 \textbf{(0.08)} & 0.32 \textbf{(0.06)} & 0.42 \textbf{(0.04)} & 0.36 \underline{(0.02)}
    & 0.01 & 0.18 \textbf{(0.17)} & 0.14 \textbf{(0.13)} & 0.02 \textbf{(0.01)} & 0.01 \underline{(0.00)}
    & 2.85 & 3.21 \textbf{(0.36)} & 2.93 \textbf{(0.08)} & 2.13 \textbf{(0.72)} & 2.21 \underline{(0.64)} \\

    MobileNetV3 & 0.62 & 0.51 \textbf{(0.11)} & 0.54 \textbf{(0.08)} & 0.55 \textbf{(0.07)} & 0.58 \underline{(0.04)}
    & 0.02 & 0.16 \textbf{(0.14)} & 0.13 \textbf{(0.11)} & 0.06 \textbf{(0.04)} & 0.02 \underline{(0.00)}
    & 2.93 & 3.18 \textbf{(0.25)} & 2.99 \textbf{(0.06)} & 1.89 \textbf{(1.04)} & 2.34 \underline{(0.59)} \\

    ResNet18 & 0.71 & 0.58 \textbf{(0.13)} & 0.60 \textbf{(0.11)} & 0.61 \textbf{(0.10)} & 0.65 \underline{(0.06)}
    & 0.05 & 0.24 \textbf{(0.19)} & 0.21 \textbf{(0.16)} & 0.10 \textbf{(0.05)} & 0.04 \underline{(0.01)}
    & 7.28 & 7.71 \textbf{(0.43)} & 7.34 \textbf{(0.06)} & 4.71 \textbf{(2.57)} & 6.41 \underline{(0.87)} \\

    \midrule
    \multicolumn{16}{c}{\textit{Class Unlearning}} \\
    \midrule
    \multicolumn{16}{l}{\textit{FashionMNIST}} \\
    LeNet & 0.90 & 0.73 \textbf{(0.17)} & 0.76 \textbf{(0.14)} & 0.99 \textbf{(0.09)} & 0.81 \underline{(0.09)}
    & 0.08 & 0.27 \textbf{(0.19)} & 0.20 \textbf{(0.12)} & 0.04 \textbf{(0.04)} & 0.00 \underline{(0.08)}
    & 2.72 & 3.07 \textbf{(0.35)} & 2.64 \textbf{(0.08)} & 1.68 \textbf{(1.04)} & 2.09 \underline{(0.63)} \\

    MobileNetV3 & 0.91 & 0.78 \textbf{(0.13)} & 0.81 \textbf{(0.10)} & 0.73 \textbf{(0.18)} & 0.87 \underline{(0.04)}
    & 0.04 & 0.17 \textbf{(0.13)} & 0.15 \textbf{(0.11)} & 0.21 \textbf{(0.17)} & 0.02 \underline{(0.02)}
    & 2.79 & 3.14 \textbf{(0.35)} & 2.88 \textbf{(0.09)} & 1.73 \textbf{(1.06)} & 2.19 \underline{(0.60)} \\

    ResNet18 & 0.96 & 0.78 \textbf{(0.18)} & 0.80 \textbf{(0.16)} & 0.73 \textbf{(0.23)} & 0.86 \underline{(0.10)}
    & 0.00 & 0.19 \textbf{(0.19)} & 0.16 \textbf{(0.16)} & 0.08 \textbf{(0.08)} & 0.00 \underline{(0.00)}
    & 6.89 & 7.08 \textbf{(0.19)} & 6.78 \textbf{(0.11)} & 4.52 \textbf{(2.37)} & 6.02 \underline{(0.87)} \\

    \cmidrule(l){1-16}
    \multicolumn{16}{l}{\textit{CIFAR-100}} \\
    LeNet & 0.36 & 0.28 \textbf{(0.08)} & 0.30 \textbf{(0.06)} & 0.40 \textbf{(0.04)} & 0.34 \underline{(0.02)}
    & 0.10 & 0.32 \textbf{(0.22)} & 0.24 \textbf{(0.14)} & 0.05 \textbf{(0.05)} & 0.01 \underline{(0.09)}
    & 2.97 & 3.31 \textbf{(0.34)} & 2.85 \textbf{(0.12)} & 1.60 \textbf{(1.37)} & 2.29 \underline{(0.68)} \\

    MobileNetV3 & 0.61 & 0.49 \textbf{(0.12)} & 0.52 \textbf{(0.09)} & 0.46 \textbf{(0.15)} & 0.57 \underline{(0.04)}
    & 0.05 & 0.21 \textbf{(0.16)} & 0.19 \textbf{(0.14)} & 0.25 \textbf{(0.20)} & 0.03 \underline{(0.02)}
    & 3.04 & 3.39 \textbf{(0.35)} & 3.13 \textbf{(0.09)} & 1.87 \textbf{(1.17)} & 2.37 \underline{(0.67)} \\

    ResNet18 & 0.69 & 0.55 \textbf{(0.14)} & 0.57 \textbf{(0.12)} & 0.49 \textbf{(0.20)} & 0.62 \underline{(0.07)}
    & 0.01 & 0.23 \textbf{(0.22)} & 0.20 \textbf{(0.19)} & 0.11 \textbf{(0.10)} & 0.01 \underline{(0.00)}
    & 7.42 & 7.62 \textbf{(0.20)} & 7.31 \textbf{(0.11)} & 4.84 \textbf{(2.58)} & 6.47 \underline{(0.95)} \\

    \midrule
    \multicolumn{16}{c}{\textit{Sample Unlearning}} \\
    \midrule
    \multicolumn{16}{l}{\textit{FashionMNIST}} \\
    LeNet & 0.89 & 0.73 \textbf{(0.16)} & 0.74 \textbf{(0.15)} & 0.99 \textbf{(0.10)} & 0.77 \underline{(0.12)}
    & 0.02 & 0.15 \textbf{(0.13)} & 0.13 \textbf{(0.11)} & 0.05 \textbf{(0.03)} & 0.09 \underline{(0.07)}
    & 2.69 & 3.03 \textbf{(0.34)} & 2.75 \textbf{(0.06)} & 1.65 \textbf{(1.04)} & 2.06 \underline{(0.63)} \\

    MobileNetV3 & 0.88 & 0.76 \textbf{(0.12)} & 0.79 \textbf{(0.09)} & 0.75 \textbf{(0.13)} & 0.82 \underline{(0.06)}
    & 0.01 & 0.18 \textbf{(0.17)} & 0.14 \textbf{(0.13)} & 0.18 \textbf{(0.17)} & 0.10 \underline{(0.09)}
    & 2.81 & 3.29 \textbf{(0.48)} & 2.92 \textbf{(0.11)} & 1.71 \textbf{(1.10)} & 2.68 \underline{(0.13)} \\

    ResNet18 & 0.92 & 0.80 \textbf{(0.12)} & 0.81 \textbf{(0.11)} & 0.54 \textbf{(0.38)} & 0.84 \underline{(0.08)}
    & 0.01 & 0.22 \textbf{(0.21)} & 0.17 \textbf{(0.16)} & 0.13 \textbf{(0.12)} & 0.10 \underline{(0.09)}
    & 6.92 & 7.12 \textbf{(0.20)} & 6.85 \textbf{(0.07)} & 4.52 \textbf{(2.40)} & 5.99 \underline{(0.93)} \\

    \cmidrule(l){1-16}
    \multicolumn{16}{l}{\textit{CIFAR-100}} \\
    LeNet & 0.34 & 0.27 \textbf{(0.07)} & 0.28 \textbf{(0.06)} & 0.39 \textbf{(0.05)} & 0.31 \underline{(0.03)}
    & 0.04 & 0.19 \textbf{(0.15)} & 0.16 \textbf{(0.12)} & 0.07 \textbf{(0.03)} & 0.12 \underline{(0.08)}
    & 2.89 & 3.27 \textbf{(0.38)} & 2.96 \textbf{(0.07)} & 1.53 \textbf{(1.36)} & 2.24 \underline{(0.65)} \\

    MobileNetV3 & 0.58 & 0.47 \textbf{(0.11)} & 0.50 \textbf{(0.08)} & 0.45 \textbf{(0.13)} & 0.53 \underline{(0.05)}
    & 0.03 & 0.22 \textbf{(0.19)} & 0.18 \textbf{(0.15)} & 0.21 \textbf{(0.18)} & 0.13 \underline{(0.10)}
    & 3.07 & 3.54 \textbf{(0.47)} & 3.18 \textbf{(0.11)} & 2.21 \textbf{(0.86)} & 2.86 \underline{(0.21)} \\

    ResNet18 & 0.66 & 0.54 \textbf{(0.12)} & 0.55 \textbf{(0.11)} & 0.32 \textbf{(0.34)} & 0.58 \underline{(0.08)}
    & 0.02 & 0.26 \textbf{(0.24)} & 0.21 \textbf{(0.19)} & 0.16 \textbf{(0.14)} & 0.13 \underline{(0.11)}
    & 7.46 & 7.66 \textbf{(0.20)} & 7.39 \textbf{(0.07)} & 4.15 \textbf{(3.31)} & 6.43 \underline{(1.03)} \\

    \bottomrule
  \end{tabular}
  }
\end{table*}

\begin{figure*}[ht]
    \centering
    \begin{subfigure}[b]{0.30\textwidth}
        \centering
        \includegraphics[width=\textwidth]{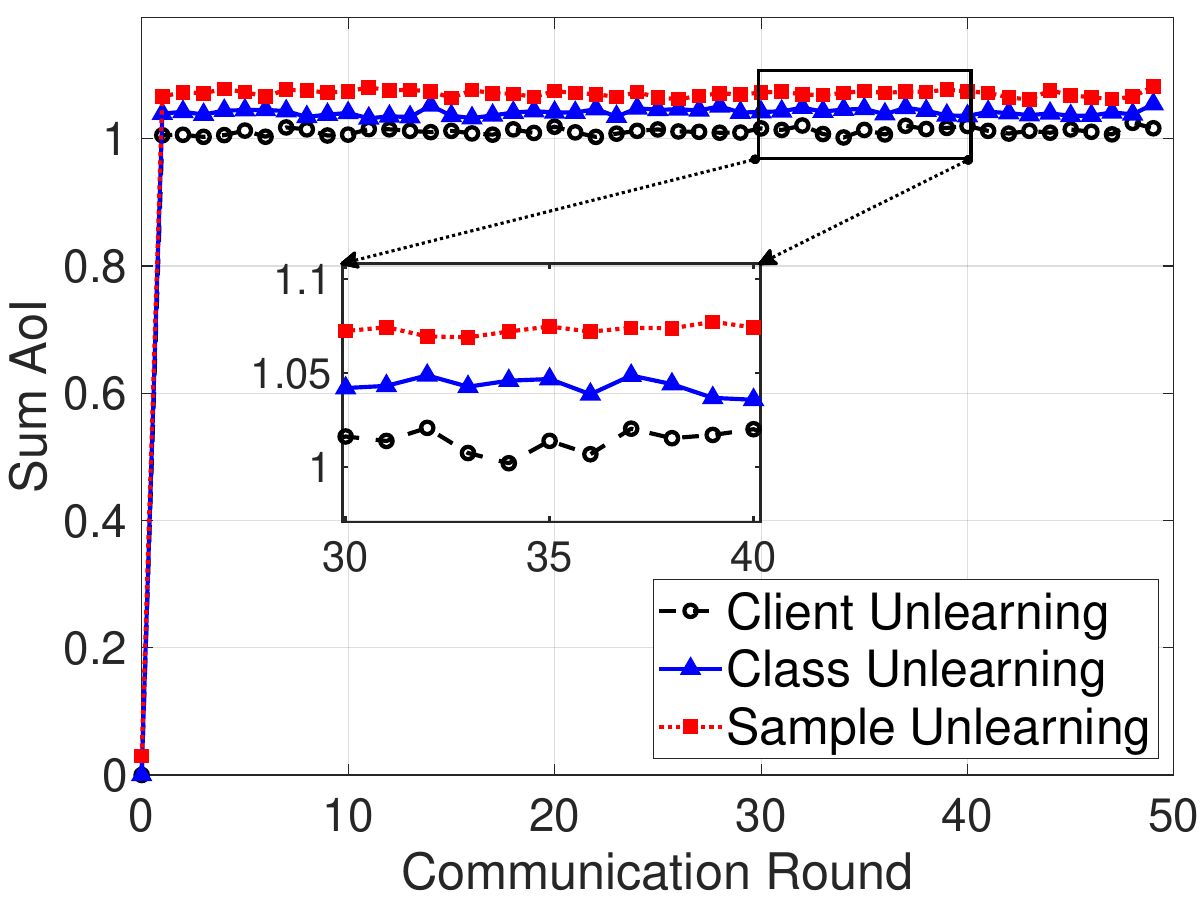}
        \caption{Sum AoI with LeNet.}
        \label{fig:sum_aoi_lenet}
    \end{subfigure}
    \hfill
    \begin{subfigure}[b]{0.30\textwidth}
        \centering
        \includegraphics[width=\textwidth]{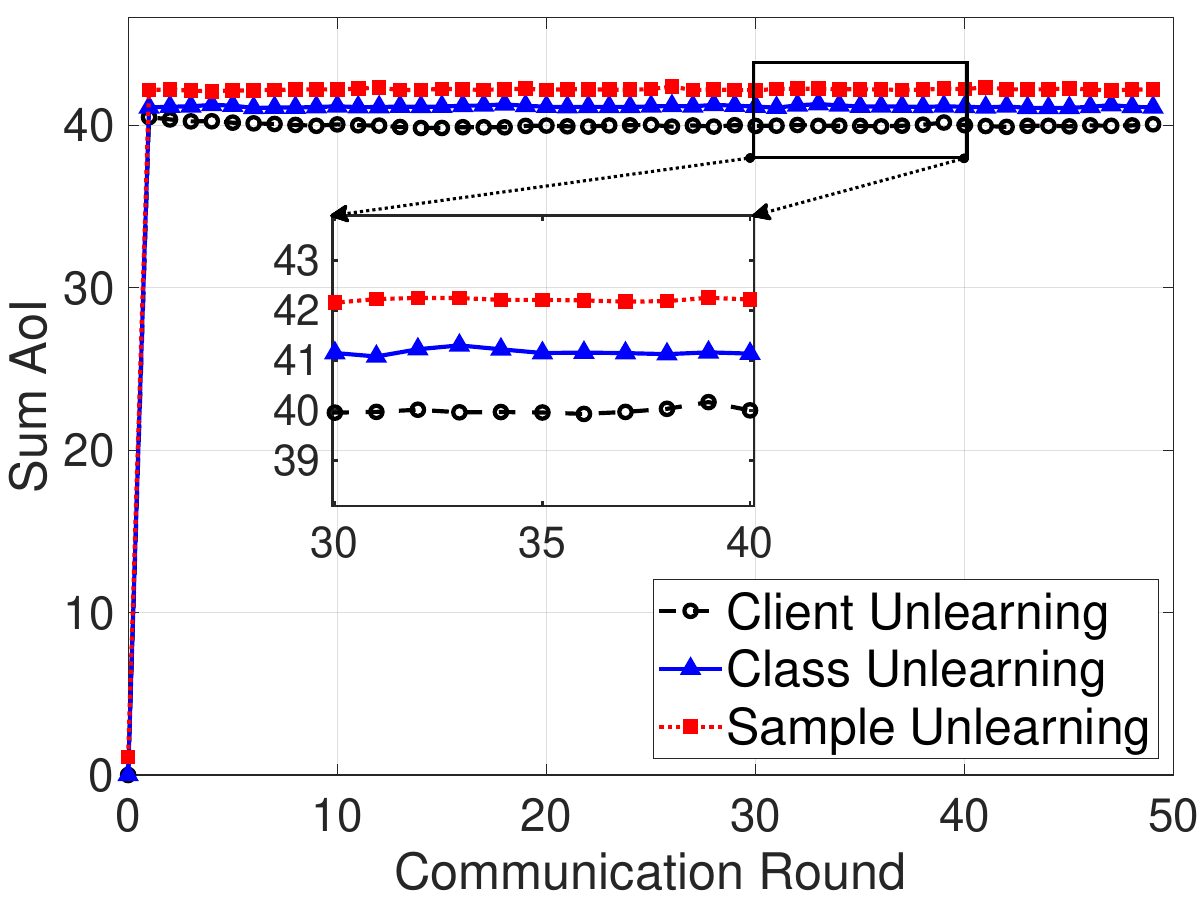}
        \caption{Sum AoI with MobileNetV3.}
        \label{fig:sum_aoi_mobilenetv3}
    \end{subfigure}
    \hfill
    \begin{subfigure}[b]{0.30\textwidth}
        \centering
        \includegraphics[width=\textwidth]{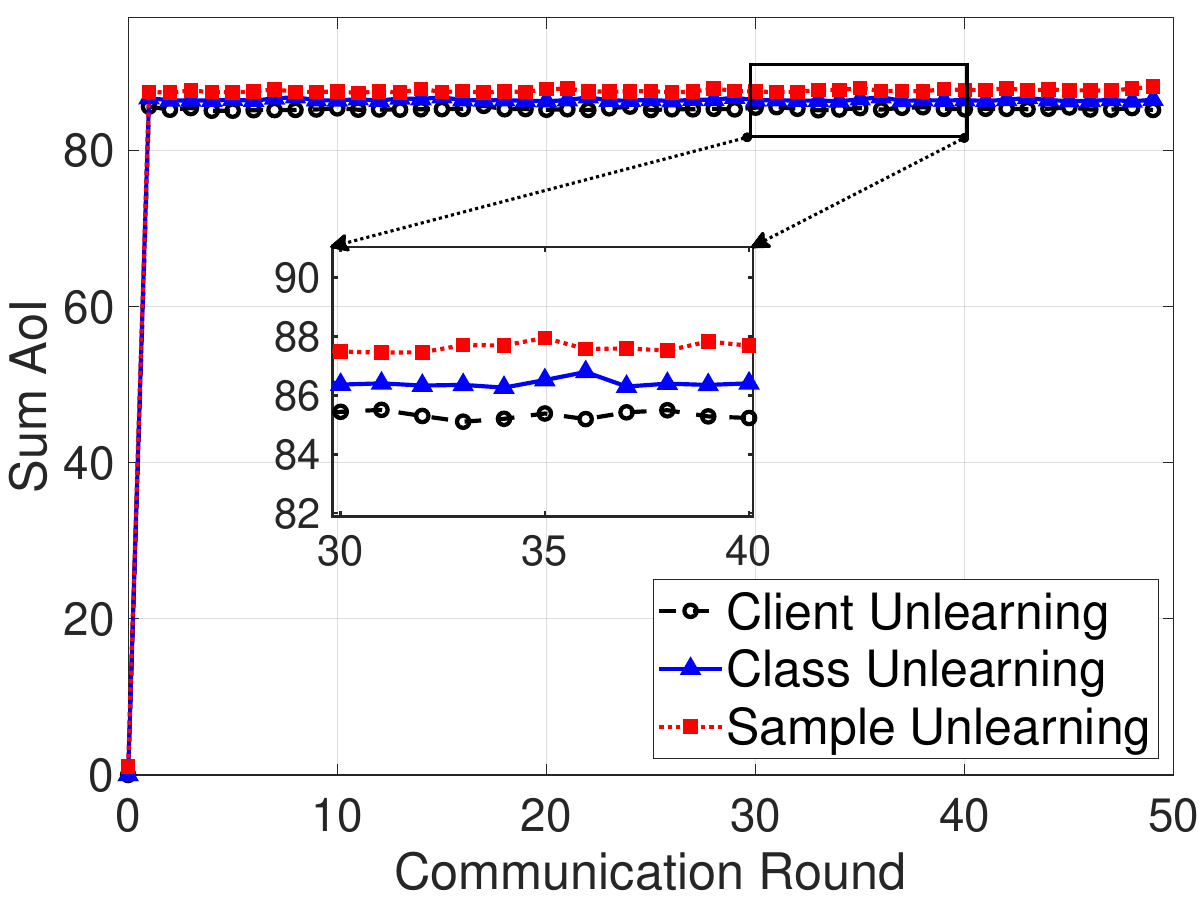}
        \caption{Sum AoI with ResNet18.}    
        \label{fig:sum_aoi_resnet18}
    \end{subfigure}
    \caption{Sum AoI comparison for different unlearning scenarios across model architectures.}
    \label{fig:sum_aoi_comparison}
\end{figure*}  

Fig.~\ref{fig:ppo_convergence} shows the PPO training over three unlearning scenarios. All models achieve stable convergence with consistent reward improvements. Notably, client unlearning exhibits the most stable convergence with minimal oscillations, while sample unlearning shows higher volatility due to its fine-grained nature. LeNet demonstrates the fastest initial convergence, whereas ResNet18 exhibits a distinctive two-phase pattern with rapid initial gains followed by fine-tuning stabilization. 

Fig.~\ref{fig:layer_sensitivity} presents a layer-wise sensitivity analysis across three model architectures (LeNet, MobileNetV3, ResNet18) under three unlearning scenarios (client unlearning, class unlearning, and sample unlearning). Remember that higher sensitivity values indicate that the corresponding layer contains parameters that are more significantly affected by the client-server data exchange and target information, thus requiring prioritized sparsification to achieve effective unlearning while minimizing impact on the remaining model performance. In particular, Fig.~\ref{fig:lenet_sensitivity} reveals that LeNet exhibits the highest sensitivity in the initial convolutional layers; Fig.~\ref{fig:mobilenet_sensitivity} demonstrates that MobileNetV3 shows localized sensitivity peaks within the intermediate depthwise separable convolution layers; and Fig.~\ref{fig:resnet_sensitivity} illustrates that ResNet18 displays a more distributed sensitivity pattern across multiple residual layers.


Table~\ref{tab:main_results} presents comprehensive comparisons between \texttt{SCALE} and the state-of-the-art methods, including Retrain~\cite{Liu2022INFOCOM}, FedEraser (F-E)~\cite{Liu2021WQOS}, Projected Gradient Ascent (PGA)~\cite{Anisa2023arxiv}, and FUSED~\cite{Zhong2025CVPR} across three unlearning scenarios on both FashionMNIST and CIFAR-100. As observed, \texttt{SCALE} demonstrates superior performance compared to F-E and PGA while maintaining good efficiency compared to FUSED.


For client unlearning on FashionMNIST, \texttt{SCALE} achieves $RA$ values of 0.82, 0.86, and 0.87 across LeNet, MobileNetV3, and ResNet18, substantially outperforming F-E (0.74, 0.81, 0.81) and PGA (0.78, 0.83, 0.84). The Average AoI for \texttt{SCALE} (2.01s, 2.16s, 5.99s) is in the between Retrain (2.59s, 2.66s, 6.75s) and FUSED (1.65s, 1.74s, 4.39s), while F-E exhibits the highest overhead at 3.00s, 2.91s, and 7.18s. Similar observations can be made in class and sample unlearning scenarios, where \texttt{SCALE} consistently maintains better $RA$ and competitive AoI compared to F-E and PGA. Although FUSED occasionally demonstrates better $RA$ and lower AoI, \texttt{SCALE} achieves a more balanced trade-off between unlearning effectiveness and model utility across diverse architectures, which makes it more suitable for real-world deployment. The same pattern appears on CIFAR-100: \texttt{SCALE} stays within $0.02$--$0.08$ of Retrain in $RA$ across all scenarios, while FUSED collapses on ResNet18 sample unlearning, where its $RA$ drops to $0.32$ while Retrain holds at $0.66$.

\subsection{Testbed Results and Discussion}

\begin{figure*}[h]
	\centering
	\includegraphics[width=0.8\textwidth]{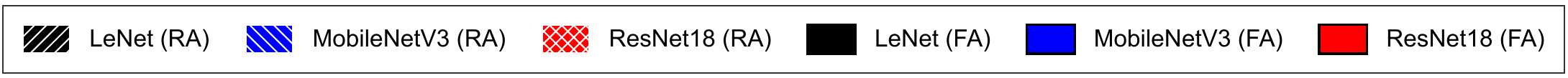}
	\vspace{5pt}
	\begin{subfigure}[b]{0.30\textwidth}
		\centering
		\includegraphics[width=\textwidth]{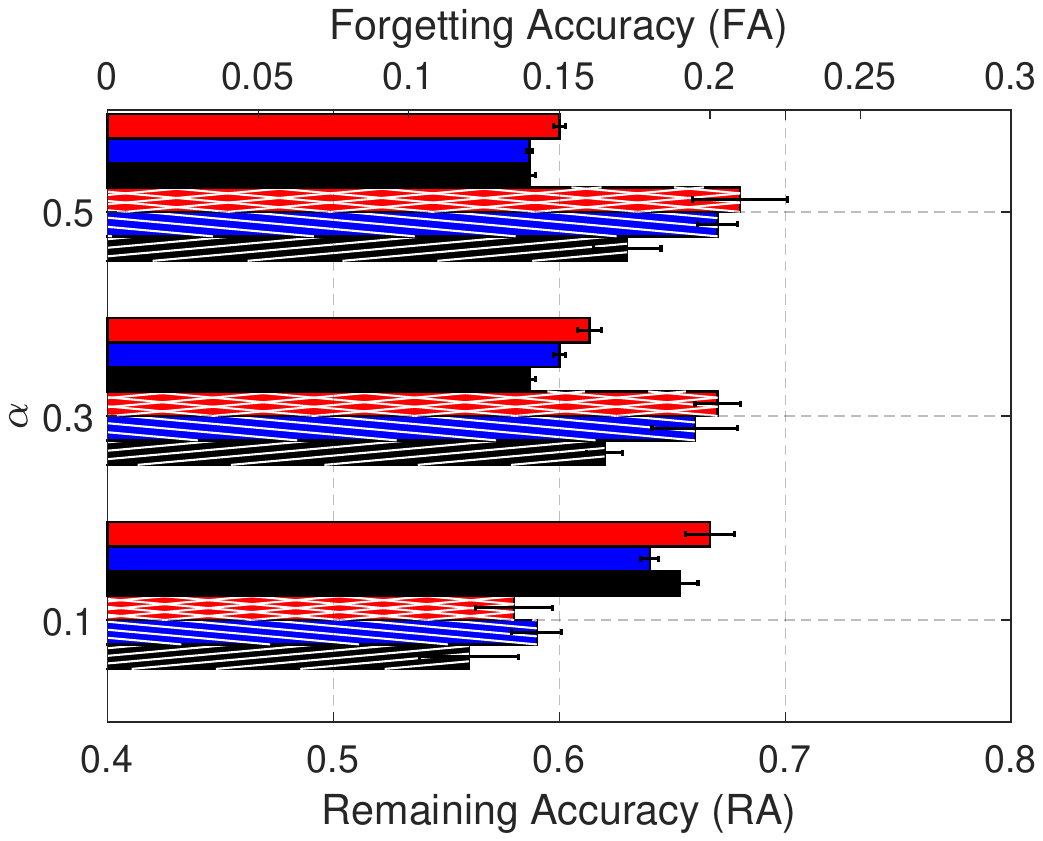}
		\caption{Client unlearning.}
		\label{fig:testbed_client_alpha}
	\end{subfigure}
	\hfill
	\begin{subfigure}[b]{0.30\textwidth}
		\centering
		\includegraphics[width=\textwidth]{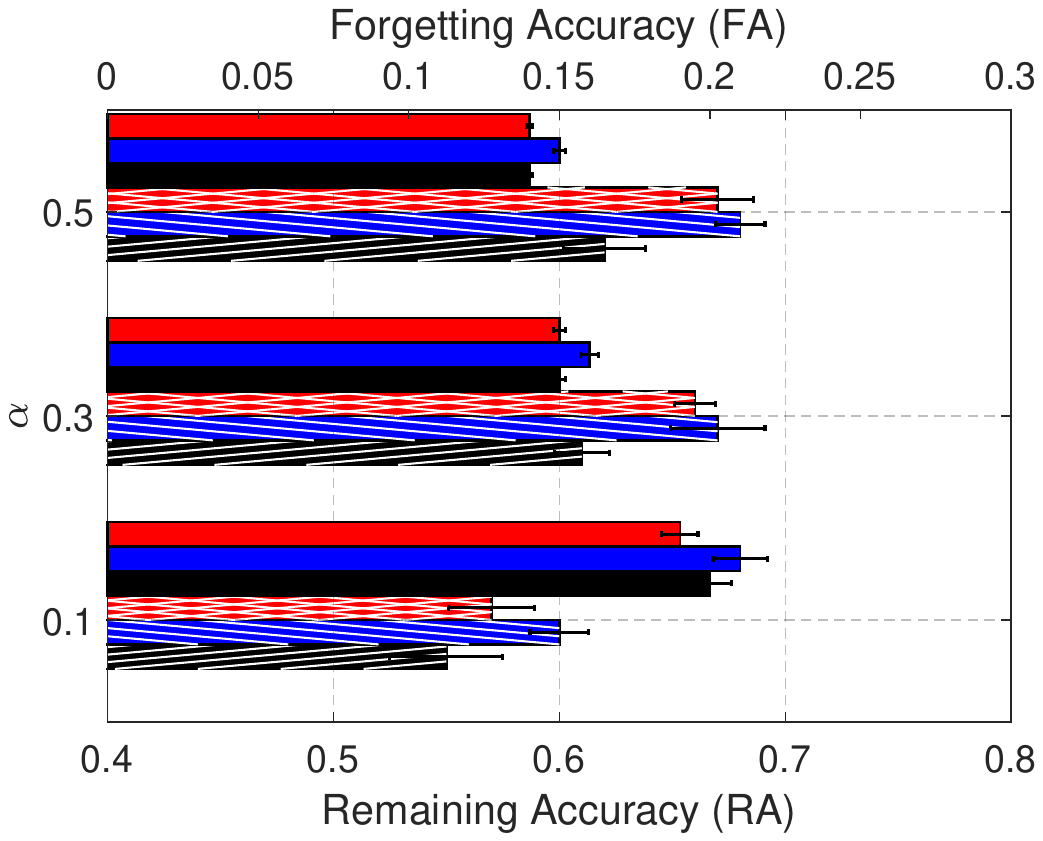}
		\caption{Class unlearning.}
		\label{fig:testbed_class_alpha}
	\end{subfigure}  
	\hfill
	\begin{subfigure}[b]{0.30\textwidth}
		\centering
		\includegraphics[width=\textwidth]{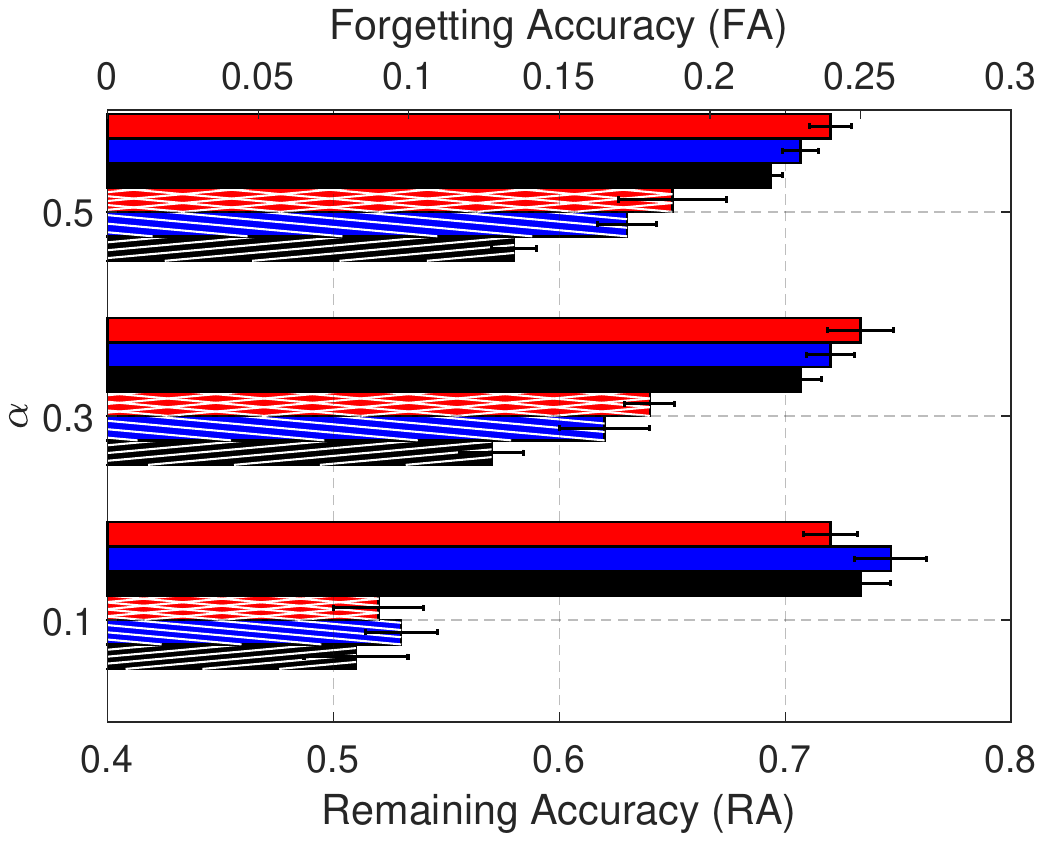}
		\caption{Sample unlearning.}
		\label{fig:testbed_sample_alpha}
	\end{subfigure}
	\caption{Impact of data heterogeneity on testbed unlearning performance in three scenarios.}
	\label{fig:testbed_heterogeneity}
\end{figure*}

Fig.~\ref{fig:sum_aoi_comparison} illustrates the temporal evolution of Sum AoI to the overall system AoI for each communication round throughout the unlearning process on three architectures. The results demonstrate consistent optimization dynamics under \texttt{SCALE} implementation, with LeNet achieving the fastest AoI stabilization, followed by MobileNetV3 and ResNet18. From the AoI performance perspective, LeNet demonstrates the lowest communication overhead with 2.013s average AoI, making it the most suitable for deployment in time-sensitive federated environments. Notably, the AoI fluctuations in physical experiments exhibit reduced variance compared to simulation results. This can be attributed to the limited client configuration, which minimizes network congestion and synchronization overhead. 


Fig.~\ref{fig:testbed_heterogeneity} presents the impact of data heterogeneity on unlearning performance across three scenarios in our hardware testbed. To validate \texttt{SCALE} under realistic hardware constraints, we deploy a subset of FashionMNIST images onto MentorPi robotic vehicles and evaluate unlearning performance using pruned model architectures optimized for resource-constrained edge devices. The Dirichlet parameter $\alpha$ controls the degree of non-IID data distribution, where smaller $\alpha$ values indicate higher data heterogeneity among clients.

As shown in Fig.~\ref{fig:testbed_client_alpha}, client unlearning performance shows strong dependence on data heterogeneity. Under highly heterogeneous conditions ($\alpha=0.1$), LeNet, MobileNetV3, and ResNet18 achieve RA values of 0.56, 0.59, and 0.58, respectively, while FA of 0.19, 0.18, and 0.20 indicate substantial forgetting difficulty with degraded model utility. As $\alpha$ increases to 0.3, RA improves to 0.62-0.67 as FA decreases to 0.14-0.16 across models. At $\alpha=0.5$, a more balanced data distribution, all models achieve RA of 0.63-0.68 with FA reduced to 0.14-0.15, which demonstrates substantially improved unlearning effectiveness. ResNet18 shows the most significant improvement, with RA increasing from 0.58 to 0.68 as $\alpha$ increases from 0.1 to 0.5. These observations reveal that balanced data distributions facilitate more effective client removal while preserving model utility, as reduced overlap in client-specific feature representations enables cleaner parameter separation.


Fig.~\ref{fig:testbed_class_alpha} shows similar patterns for class unlearning. 
The consistent improvement across $\alpha$ values validates that uniform class distribution across clients supports more targeted class removal with minimal impact on remaining knowledge.

Note that the sample unlearning results in Fig.~\ref{fig:testbed_sample_alpha} exhibit similar heterogeneity sensitivity but with overall degraded performance compared to coarser-grained scenarios. At $\alpha=0.1$, LeNet, MobileNetV3, and ResNet18 achieve RA of 0.51, 0.53, and 0.52 with FA of 0.25, 0.26, and 0.24, respectively. Even at $\alpha=0.5$, RA only reaches 0.58-0.65 while FA remains at 0.22-0.24, substantially higher than the 0.14-0.15 FA observed in client and class unlearning. ResNet18 demonstrates the largest RA improvement (from 0.52 to 0.65), but its FA shows minimal reduction. This performance gap stems from the fine-grained nature of sample removal, in which individual data influences are diffusely distributed across parameter space, making precise localization and elimination significantly more challenging regardless of data distribution characteristics.


Results across all three scenarios consistently show that data heterogeneity is a key factor influencing the effectiveness of federated unlearning in MEC environments. Our testbed validation indicates that \texttt{SCALE} achieves stable performance under different levels of heterogeneity on resource-constrained edge devices. In particular, balanced data distributions ($\alpha=0.5$) provide the best trade-off between forgetting effectiveness ($FA=0.14-0.24$) and model utility preservation.

\section{Conclusion} \label{sec:conclusion}

In this paper, we have presented \texttt{SCALE}: a novel dual-level federated unlearning framework that addresses the issues of low unlearning precision and lack of temporal information freshness in mobile edge computing systems. By integrating historical contribution-based layer sensitivity analysis with Age-of-Information-driven adaptive sparsification, \texttt{SCALE} effectively balances unlearning effectiveness with information freshness while maintaining computational efficiency. Our theoretical analysis demonstrates the convergence properties and acceleration advantages of the proposed framework. At the same time, experiments based on multiple neural network architectures validate the outstanding performance of \texttt{SCALE} compared to existing baselines. The testbed implementation further confirms the practical feasibility of our framework in real-world MEC environments. We believe that \texttt{SCALE} offers a powerful solution for federated unlearning in MEC that meets stringent regulatory requirements while preserving model utility and system performance.

\end{document}